\documentclass[times,sort&compress,3p,authoryear]{elsarticle}

\usepackage{amsmath,amssymb,amsthm,booktabs}
\usepackage{amsfonts}
\usepackage{mathrsfs}

\usepackage{graphicx,xcolor}
\usepackage{subcaption}
\usepackage[labelfont=bf]{caption}

\usepackage{enumerate}
\usepackage{enumitem}
\usepackage{multirow}
\usepackage{adjustbox}
\usepackage{longtable}
\usepackage{dcolumn}

\usepackage{url} 
\usepackage{natbib}
\usepackage[colorlinks, pdfpagelabels]{hyperref}

\allowdisplaybreaks[4]

\usepackage[ruled,vlined,linesnumbered]{algorithm2e}
\usepackage{ifthen}

\makeatletter
\newcounter{savedalgoline}

\newenvironment{longalgo}[1][]{
\setcounter{savedalgoline}{0}

\begin{algorithm}[H]
\caption{#1}
}{
\end{algorithm}
}
\makeatother

\theoremstyle{plain}
\newtheorem{theorem}{Theorem}
\newtheorem{proposition}{Proposition}

\theoremstyle{definition}

\newtheorem{remark}{Remark}

\newtheorem{assumption}{Assumption}

\makeatletter
\renewenvironment{proof}[1][\proofname]{
\par
\pushQED{\qed}
\normalfont \topsep6pt \@plus6pt \relax
\trivlist
\item[\hskip\labelsep \textbf{\textup{#1}}]\ignorespaces
}{
\popQED\endtrivlist\@endpefalse
}
\makeatother

\newcommand{\subhead}[1]{\noindent\upshape\textbf{#1}}

\newcommand{\0}{{\mathbf 0}}
\newcommand{\1}{{\mathbf 1}}

\newcommand{\A}{{\mathbf A}}

\newcommand{\D}{{\mathbf D}}

\newcommand{\bi}{{\mathbf i}}
\newcommand{\I}{{\mathbf I}}
\newcommand{\bj}{{\mathbf j}}

\newcommand{\M}{{\mathbf M}}

\newcommand{\bR}{{\mathbf R}}

\newcommand{\bt}{{\mathbf t}}

\newcommand{\bu}{{\mathbf u}}

\newcommand{\w}{{\mathbf w}}

\newcommand{\x}{{\mathbf x}}

\newcommand{\y}{{\mathbf y}}

\newcommand{\z}{{\mathbf z}}

\newcommand{\bmu}{{\boldsymbol \mu}}

\newcommand{\bSigma}{{\boldsymbol \Sigma}}
\newcommand{\btau}{{\boldsymbol \tau}}

\newcommand{\tr}{\mathop{\text{\rm tr}}}
\newcommand{\diag}{\operatorname{diag}}

\newcommand{\EE}{{\mathbb{E}}}
\newcommand{\Prb}{\mathrm{Pr}}
\newcommand{\Var}{{\rm Var}}
\newcommand{\Cov}{\mathop{\text{Cov}}}

\newcommand{\convd}{\xrightarrow{P}}

\newcommand{\Ccal}{\mathcal{C}}
\newcommand{\Fcal}{\mathcal{F}}
\newcommand{\Gcal}{\mathcal{G}}

\newcommand{\Ucal}{\mathcal{U}}

\newcommand{\Fscr}{\mathscr{F}}

\begin{document}

\begin{frontmatter}

\title{Tests for white noise via asymptotically independent U-statistics in high-dimensions} 

\author[1]{\texorpdfstring{Yuanya Xu\corref{mycorrespondingauthor}}{Yuanya Xu}}

\address[1]{School of Statistics and Data Science, Shanghai University of Finance and Economics, Shanghai 200433, China} 

\cortext[mycorrespondingauthor]{Corresponding author. Email address: \url{xu.yuanya@stu.sufe.edu.cn}}

\begin{abstract}
We propose a high-dimensional white noise test that captures serial correlations within and across component series without specifying an alternative model. The test statistic is a U-statistic based on sample autocovariances. Under the null, asymptotic normality is established as $p, T \to \infty$ jointly using martingale difference theory. Our approach imposes no cross-sectional independence assumption, requiring only spectral conditions on $\bSigma_0$. Theoretically, we link cross-sectional correlations to a graph structure, integrating algebraic and geometric analyses to facilitate the derivation. Simulations confirm reliable size control and satisfactory power across various $(p, T)$ settings.
\end{abstract}

\begin{keyword}
High-dimensional white noise test \sep U-statistic \sep Cross-sectional dependence \sep Graph structure.
\MSC[2020] Primary 62H10 \sep 62H15 \sep Secondary 60F05
\end{keyword}

\end{frontmatter}

\section{Introduction} \label{sec:intro}

We consider the problem of testing for white noise in a $p$-dimensional zero-mean time series $\{\x_t\}_{t=1}^T$. The hypotheses are
\begin{equation} \label{eq:H_0}
H_0: \x_1,\dots,\x_T \text{ are mutually independent}
\quad\text{vs.}\quad
H_1: \text{serial dependence exists}.
\end{equation}

Let $\bSigma_{\tau} = \EE(\x_t \x_{t-\tau}^\top)$ be the population autocovariance matrix at lag $\tau$, with $(i,j)$-th entry $\sigma_{i,j,\tau}$. For simplicity, we write $\sigma_{ij} = \sigma_{i,j,0}$. The population autocorrelation matrix is $\bR_{\tau} = \D^{-1/2}\bSigma_{\tau}\D^{-1/2}$, where $\D = \diag(\sigma_{11},\dots,\sigma_{pp})$, and its $(i,j)$-th entry is $\rho_{i,j,\tau} = \sigma_{i,j,\tau}/\sqrt{\sigma_{ii}\sigma_{jj}}$. For $0 \leq \tau \leq q$, the sample autocovariance matrix is
$$
\widehat{\bSigma}_{\tau} = \frac{1}{T} \sum_{t=1}^T \x_t \x_{t-\tau}^\top,
$$
with circular boundary conditions $\x_t = \x_{t+T}$ for $t \leq 0$, and its $(i,j)$-th entry is $\hat{\sigma}_{i,j,\tau}$. Let $\widehat{\D} = \diag(\hat{\sigma}_{11},\dots,\hat{\sigma}_{pp})$. The sample autocorrelation matrix is 
$$
\widehat{\bR}_{\tau} = \widehat{\D}^{-1/2} \widehat{\bSigma}_{\tau} \widehat{\D}^{-1/2},
$$
with $(i,j)$-th entry $\hat{\rho}_{i,j,\tau} = \hat{\sigma}_{i,j,\tau} / \sqrt{\hat{\sigma}_{ii}\hat{\sigma}_{jj}}$.

In the low-dimensional regime with fixed $p$ and $T \to \infty$, the Ljung-Box test \citep{Ljung1978} and its multivariate extensions, namely the Hosking test \citep{Hosking1980} and Li-McLeod test \citep{Li1981}, are standard inferential tools. Their test statistics are essentially the sum of squared Frobenius norms of $\{\widehat{\bR}_{\tau}\}_{\tau=1}^{q}$. Nevertheless, these conventional approaches fail under high dimensionality due to the curse of dimensionality, especially when $p$ is large or diverges proportionally with $T$ and may even exceed $T$. The root cause lies in that the sample autocorrelation matrices $\widehat{\bR}_{\tau}$ are no longer consistent estimators for their population counterparts $\bR_{\tau}$ when $p$ is large. As a result, the classical fixed-dimensional asymptotic framework no longer applies, distorting the limiting null distributions of the test statistics and severely compromising size control and power.

To address high-dimensional challenges, existing white noise tests can be grouped into three broad categories by their core logic. The first two categories preserve the full high-dimensional structure but employ different norms to summarize dependence, while the third circumvents dimensionality issues through transformation or dimension reduction.

The first category, encompassing both $\ell_\infty$ and $\ell_2$ approaches, captures dependence by aggregating autocorrelations across lags and variable pairs using different metrics. Maximum-type ($\ell_\infty$) sparse statistics, exemplified by \cite{Chang2017}'s
\[
M_q = \max_{1\le\tau\le q}\max_{1\le i,j\le p}|\hat{\rho}_{i,j,\tau}|,
\]
isolate dominant signals and excel under sparse alternatives with few strong cross-correlations. Subsequent work improved robustness via rank correlation \citep{Tsay2020} and extended to nonlinear dependence \citep{Chen2025}. Sum-type ($\ell_2$) dense statistics, such as \cite{Li2019}'s
\[
S_q = \sum_{\tau=1}^q \|\widehat{\bSigma}_\tau\|_F^2,
\]
aggregate all correlations and achieve high power against dense alternatives with many weak dependencies. Using random matrix theory, \cite{Li2019} established asymptotic normality for standardized $S_q$, enabling valid inference for dense high-dimensional serial dependence.

The second category comprises spectral-based methods, which analyze global eigenvalue properties of high-dimensional autocovariance matrices via random matrix theory. \cite{Li2018} derived a central limit theorem for joint spectral statistics of dependent covariance matrices, while \cite{Bose2020} characterized the limiting spectral distribution of non-Hermitian random matrices for sample autocovariance spectra. These theoretically demanding approaches strengthen the foundations of high-dimensional independence testing.

The third category employs transformation-based methods that bypass direct high-dimensional processing by converting the problem into lower-dimensional or univariate settings. \cite{Wang2020} proposed an equivalent mean test transformation: for each lag $\tau=1,\ldots,q$, construct the $p^2\times 1$ vector $\y_{t,\tau} = \text{vec}(\x_t\x_{t-\tau}^\top)$ containing all cross-products, then stack them as $\y_t = (\y_{t,1}^\top, \ldots, \y_{t,q}^\top)^\top$. The original hypothesis $\bSigma_1 = \cdots = \bSigma_q = \0$ becomes $\EE[\y_t] = \0$, mapping high-dimensional white noise testing to a high-dimensional mean testing problem solvable by existing methods. Alternatively, \cite{Ling2021} developed a dimension compression approach that reduces the $p$-dimensional series to a univariate series via the normalized $\ell_1$-norm $\{\|\x_t\|_{\ell_1}/p\}$ and applies the classical Ljung-Box test. This framework has been further extended by subsequent research \citep{Zhang2025Automatic, Zhang2025ARCH}.

Despite the progress above, all existing methods share a critical limitation: their testing power heavily relies on the unknown underlying dependence structure. Maximum-type statistics excel at sparse dependence, sum-type statistics outperform in dense scenarios, and spectral methods capture global features but lack specificity for local directional signals. Even \cite{Feng2022testing} attempted to combine maximum and sum-of-squares statistics into a Max-Sum test exploiting their asymptotic independence, such efforts remain confined to discrete strategy fusion and fail to realize adaptive detection for unspecified dependence structures.

This paper develops a novel high-dimensional white noise test that captures serial correlations both within and across component series, without specifying an alternative model. The test statistic takes the form of a U-statistic constructed from the $\ell_a$-norm of $\{\widehat{\bSigma}_\tau\}_{\tau=1}^q$. Under the null hypothesis, we establish its asymptotic normality as both the dimension $p$ and sample size $T$ diverge jointly, leveraging martingale difference theory. Without imposing cross-sectional independence, our approach imposes only spectral regularity conditions on $\bSigma_0$. The theoretical derivation links these cross-sectional correlations to a graph structure, thereby integrating algebraic and geometric insights. We further prove that the proposed test achieves power tending to one under the VMA(1) alternative. Extensive Monte Carlo simulations confirm that the test maintains reliable size control and delivers satisfactory power across various $(p,T)$ configurations.

The rest of this paper is structured as follows. Section \ref{sec:setup_stats} introduces the notations and test statistic construction. Section \ref{sec:thm} presents the core contributions, including the proposed test, derivations of its asymptotic distributions under the null and VMA(1) alternatives, and detailed computational procedures. Section \ref{sec:simu} evaluates the finite-sample performance of the test via comprehensive simulations. Section \ref{sec:pf} provide all technical proofs. Section \ref{sec:con} concludes the paper and discusses future research directions.

\section{Basic setup and statistic} \label{sec:setup_stats}

We begin by introducing the essential notations used in this work, starting with the asymptotic notations for sequences. 

Let $u_{T,p}$ and $v_{T,p}$ denote sequences parameterized by indices $T$ and $p$; the asymptotic notations are defined as follows: $u_{T,p}=o(v_{T,p})$ if $\limsup_{T,p \to \infty} |u_{T,p}/v_{T,p}| = 0$, $u_{T,p}=O(v_{T,p})$ if $\limsup_{T,p \to \infty} |u_{T,p}/v_{T,p}| < \infty$, $u_{T,p}=\Theta(v_{T,p})$ if $0 < \liminf_{T,p \to \infty} |u_{T,p}/v_{T,p}| \leq \limsup_{T,p \to \infty} |u_{T,p}/v_{T,p}| < \infty$, and $u_{T,p} \sim v_{T,p}$ if $\lim_{T,p \to \infty} u_{T,p}/v_{T,p} = 1$. 

For a $p$-dimensional random vector $\x_t = (x_{1,t}, \ldots, x_{p,t})^{\top}$ with mean $\bmu_t = (\mu_{1,t}, \ldots, \mu_{p,t})^{\top}$, the $k$-th order joint central moment for any indices $j_1, \ldots, j_k \in \{1, \ldots, p\}$, where $k$ is a positive integer, is defined as
\begin{equation} \label{eq:jcm_def}
\Pi_{j_1, \ldots, j_k, t} = \EE\left[ (x_{j_1,t} - \mu_{j_1,t}) \cdots (x_{j_k,t} - \mu_{j_k,t}) \right].
\end{equation}
For convenience, we denote $\Pi_{j_1, \ldots, j_k, 0}$ by $\Pi_{j_1, \ldots, j_k}$.

For a set $S$ consisting of variables, two variables $x, y \in S$ are said to be linearly dependent if their difference is a fixed constant, i.e., $x - y = C$ for some constant $C$. The set of free variables $\Fcal(S)$ is defined as a maximal subset of $S$ satisfying:
\begin{enumerate}[label=(\arabic*), leftmargin=*, itemsep=0.3ex]
\item $\Fcal(S)$ contains no constant elements.
\item For any distinct $u, v \in \Fcal(S)$, $u - v$ is not fixed.
\item For any variable $w \in S$, there exists $u \in \Fcal(S)$ such that $w - u$ is fixed.
\end{enumerate}
The cardinality of $\Fcal(S)$ is denoted by $|S|_{\Fcal}$. The following proposition guarantees the additivity of translated free variable sets.

\begin{proposition} \label{prop:translation_disjointness}
Let $A = \{a_1, \ldots, a_n\}$ be a set of variables, and let $B = \{b_1, \ldots, b_n\}$ where $b_i = a_i - c_i$ for fixed constants $c_i \neq 0$. If $\Fcal(B) = \{a_i - c_i : a_i \in \Fcal(A)\}$, then $|\Fcal(A) \cup \Fcal(B)| = |A|_{\Fcal} + |B|_{\Fcal}$.
\end{proposition}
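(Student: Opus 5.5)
We need $|\Fcal(A)\cup\Fcal(B)| = |\Fcal(A)| + |\Fcal(B)|$. Since $|\Fcal(A)\cup\Fcal(B)| = |\Fcal(A)|+|\Fcal(B)|-|\Fcal(A)\cap\Fcal(B)|$, it suffices to show that $\Fcal(A)$ and $\Fcal(B)$ are disjoint. The definition of $|\cdot|_{\Fcal}$ then finishes the argument.

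The hypothesis gives $|B|_{\Fcal} = |\Fcal(B)| = |\Fcal(A)|$, because the map $u\mapsto u-c_u$ is a bijection from $\Fcal(A)$ onto $\Fcal(B)$. Here I write $c_u$ for the constant attached to the index $i$ with $u=a_i$.

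\textbf{Disjointness.} I argue by contradiction. Suppose $w\in\Fcal(A)\cap\Fcal(B)$. By hypothesis $w = a_j - c_j$ for some $a_j\in\Fcal(A)$. Then $w$ and $a_j$ are two elements of $\Fcal(A)$ with $a_j - w = c_j$, a fixed constant.

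\begin{itemize}
\item If $w\neq a_j$ as elements, this contradicts property (2) of free variable sets applied to $\Fcal(A)$, which forbids two distinct members from differing by a fixed constant.
\item If $w = a_j$, then $c_j = a_j - w = 0$, contradicting the assumption $c_j\neq 0$.
\end{itemize}

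Either way we have a contradiction, so $\Fcal(A)\cap\Fcal(B)=\emptyset$. Inclusion–exclusion then gives $|\Fcal(A)\cup\Fcal(B)| = |A|_{\Fcal}+|B|_{\Fcal}$.

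\textbf{Main subtlety.} The delicate point is interpretive rather than technical. The argument treats "variables" as symbolic expressions, so that $a_j - c_j$ and $a_j$ are distinct objects precisely because $c_j\neq0$. Equality $w=a_j$ must therefore be read as identity of the underlying variable, and the difference $a_j-(a_j-c_j)=c_j$ must be read as a fixed nonzero constant. I would state this convention explicitly at the start of the proof, so that the case split above is clearly exhaustive. No earlier result of the paper is needed beyond the definition of $\Fcal(\cdot)$.
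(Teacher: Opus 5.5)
Your proof is correct and follows the same route as the paper: inclusion--exclusion reduces the claim to $\Fcal(A)\cap\Fcal(B)=\emptyset$, and a common element $w=a_j-c_j$ would give two members of $\Fcal(A)$ differing by the fixed constant $c_j$, which property (2) rules out unless they coincide, in which case $c_j=0$. Your explicit two-case split is a cleaner rendering of the paper's step ``dependent free variables must be identical, hence $c_{i_2}=0$''. You assert that $u\mapsto u-c_u$ is injective when you derive $|\Fcal(B)|=|\Fcal(A)|$; this also follows from property (2), and the paper likewise leaves it implicit.
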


\begin{remark}
Note that the independence and equality discussed above are symbolic. While two free variables $a_{i_1}$ and $a_{i_2} - c_{i_2}$ may accidentally take the same numerical value for specific assignments, they remain distinct elements in the set of formal variables. The identity $|\Fcal(A) \cup \Fcal(B)| = 2|A|_{\Fcal}$ counts the number of such distinct symbolic entities rather than their evaluated results.
\end{remark}

Consider a sequence of $p$-dimensional real-valued random vectors $\{\x_t\}_{t=1}^T$ generated by a linear process
\begin{equation} \label{eq:linear_proc}
\x_t = (x_{1t}, \ldots, x_{pt})^{\top} = \sum_{\ell=0}^{\infty} \A_{\ell} \z_{t-\ell},
\end{equation}
where $\{\A_{\ell}\}_{\ell \geq 0}$ are unknown $p \times p$ coefficient matrices, while $\{\z_t\}$ is a sequence of i.i.d. $p$-dimensional random vectors satisfying $\EE[\z_t] = \0$ and $\Cov(\z_t) = \I_p$. The process has zero mean $\EE[\x_t] = \0$ and autocovariance matrices $\bSigma_{\tau} = \Cov(\x_t, \x_{t-\tau})$ that depend only on the lag $\tau$. 

Under the null hypothesis, the autocovariance matrices vanish for all lags $\tau$. Motivated by this property, we construct our statistic, which combines the lagged auto-covariances up to a finite lag $q$:
\begin{equation} \label{eq:my_stat}
\Ucal_q(a) = \frac{1}{N_q(T,a)} \sum_{\bt_a \in \Ccal_q(T,a)} \sum_{\tau=1}^q \sum_{i,j=1}^{p} \prod_{k=1}^{a} x_{i,t_k} x_{j,t_k-\tau},
\end{equation}
where $\Ccal_q(T,a)$ is the set of all strictly increasing sequences $\bt_a = (t_1, \dots, t_a)$ satisfying 
$$
t_k \in \{q+1, \dots, T\}, \qquad 
\min_{1 \leq k \leq a-1} t_{k+1} - t_k > q, \qquad
\forall\; 1 \leq k \leq a,
$$
with $a$ being a finite positive even integer, and the cardinality of this set is given by $N_q(T,a) \triangleq C_{T - aq}^a$ when $T \geq aq + a$, and $0$ otherwise. We reject $H_0$ for large values of $\Ucal_q(a)$. 

\begin{remark} \label{rem:distinct_extended_sets}
For any $\bt_a = (t_1,\dots,t_a)\in \Ccal_q(T,a)$, let $S(\bt_a) = \{t_1,\dots,t_a\}$ and define its $\tau$-extension
$$
S_{\tau}(\bt_a) = \{t_1 - \tau, t_1, \ldots, t_a - \tau, t_a\}, \qquad
1 \leq \tau \leq q,
$$
where $\left|S(\bt_a)\right|_{\Fcal} = \left|S_{\tau}(\bt_a)\right|_{\Fcal} = a$ and $\left|S_{\tau}(\bt_a)\right| = 2a$. The mapping $S(\bt_a) \mapsto S_{\tau}(\bt_a)$ is bijective. Hence, distinct tuples $\bt_a$, $\bt_a'$ or distinct indices $\tau$, $\tau'$ yield distinct extended sets $S_{\tau}(\bt_a)$ and $S_{\tau'}(\bt_a')$.
\end{remark}

\section{Main results} \label{sec:thm}

\subsection{Model assumptions and the test procedure}

We derive the asymptotic distribution of $\Ucal_q(a)$ under the null hypothesis, given the following assumptions.

\begin{assumption} \label{assum:moment_cond}
The sequence $\{\z_t\}_{1 \leq t \leq T}$ comprises independent real-valued $p \times 1$ random vectors $\z_t = (z_{it})_{1 \leq i \leq p}$ with independent components, where each $z_{it}$ satisfies
$$
\EE[z_{it}] = 0, \qquad \EE[z_{it}^2] = 1, \qquad \EE[z_{it}^4] < \infty.
$$
\end{assumption}

\begin{assumption} \label{assum:high_order_jcm}
Under $H_0$, for any $j_1, j_2, j_3, j_4 \in \{1, \ldots, p\}$, the notation defined in \eqref{eq:jcm_def} has the following value:
$$
\Pi_{j_1, j_2, j_3} = 0, \qquad
\Pi_{j_1, j_2, j_3, j_4} = \kappa \left( \sigma_{j_1 j_2} \sigma_{j_3 j_4} + \sigma_{j_1 j_3} \sigma_{j_2 j_4} + \sigma_{j_1 j_4} \sigma_{j_2 j_3} \right)
$$
for some constant $\kappa < \infty$.
\end{assumption}

\begin{assumption} \label{assum:spectral_bound}
Let $\bSigma_0 = \Cov(\x_t, \x_t) = (\sigma_{ij})_{1 \le i,j \le p} \in \mathbb{R}^{p \times p}$ be the covariance matrix of $\x_t$. There exist constants $C_1, C_2, C_3 > 0$, independent of $p$, such that
\begin{enumerate}[label=(\arabic*), leftmargin=*, itemsep=0.3ex]
\item $\left\||\bSigma_0|\right\|_2 \le C_1$;
\item $\max_{1 \le i \le p} \sum_{j=1}^p |\sigma_{ij}| \le C_2 \sqrt{p}, \qquad \max_{1 \le i, j \le p} |\sigma_{ij}| \le C_3$;
\item $\bSigma_0$ contains at least $\Theta(p)$ entries of constant order.
\end{enumerate}
\end{assumption}

\begin{remark}
In (1), $\|\cdot\|_2$ denotes the spectral norm and $|\bSigma_0|$ is the entry-wise absolute value matrix. Condition (1) directly implies (2). The entry-wise $\ell_a$-norm is defined as 
\[
\|\A\|_{\ell_a} = \left( \sum_{i,j} |a_{ij}|^a \right)^{1/a},
\]
for any matrix $\A = (a_{ij})$. Condition (3) is mild and automatically satisfied when $\bSigma_0$ is non-degenerate, as its diagonal entries $\{\sigma_{ii}\}_{i=1}^p$ are variances of constant order and already contribute $\Theta(p)$ constant entries. Under Assumption~\ref{assum:spectral_bound}, we have $\|\bSigma_0\|_{\ell_a}^a = \Theta(p)$.
\end{remark}

We establish the joint asymptotic normality of finite-order U-statistics defined in \eqref{eq:my_stat}.

\begin{theorem} \label{thm:H_0}
Under the null hypothesis $H_0$ defined in \eqref{eq:H_0}, and assuming that Assumptions~\ref{assum:moment_cond}-\ref{assum:spectral_bound} are satisfied, the observed data are generated according to the model:
$$
\x_t = \bSigma_0^{1/2} \z_t, \qquad t = 1, \ldots, T.
$$
For any finite set of distinct positive even integers $\{a_1, \dots, a_m\}$, as $T, p \to \infty$, the following joint convergence holds,
$$
\left[ \frac{\Ucal_q(a_1)}{\sigma(a_1)}, \dots, \frac{\Ucal_q(a_m)}{\sigma(a_m)} \right]^{\top} \xrightarrow{d} N(\0, \I_m),
$$
where the asymptotic variance $\sigma^2(a) = q N_q(T,a)^{-1} \cdot \|\bSigma_0\|_{\ell_a}^{2a}$.
\end{theorem}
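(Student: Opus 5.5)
The plan is to write $\Ucal_q(a)$ as a sum of martingale differences and apply the martingale central limit theorem (e.g. Hall and Heyde). The joint statement then follows from the Cramér--Wold device.

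\emph{Martingale representation.} Fix real coefficients $c_1,\dots,c_m$ and set $\mathcal{S}=\sum_{r} c_r\,\Ucal_q(a_r)/\sigma(a_r)$. Let $\Fscr_s=\sigma(\z_1,\dots,\z_s)$. Since $\x_t=\bSigma_0^{1/2}\z_t$, under $H_0$ the vectors $\x_t$ are independent. Index each summand of \eqref{eq:my_stat} by its largest time $t_a$, and write
\[
\Ucal_q(a)=\sum_{s} D_s(a),\qquad D_s(a)=\frac{1}{N_q(T,a)}\sum_{\bt_a:\,t_a=s}\sum_{\tau=1}^q\sum_{i,j}\prod_{k=1}^a x_{i,t_k}x_{j,t_k-\tau}.
\]
In every summand, the time $t_a$ exceeds all the other times in $S_\tau(\bt_a)$, because the gaps are larger than $q\ge\tau$. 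By Remark~\ref{rem:distinct_extended_sets}, the variable $x_{i,t_a}$ therefore enters only once, linearly, and $\EE[x_{i,t_a}\mid\Fscr_{s-1}]=0$. Hence $\EE[D_s(a)\mid\Fscr_{s-1}]=0$, and the combination $D_s=\sum_r c_r D_s(a_r)/\sigma(a_r)$ is a martingale difference array.

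\emph{Variance.} Next compute $\EE\,\Ucal_q(a)\Ucal_q(a')$. By independence in time, a product of two summands has nonzero mean only when the multisets of times coincide. With Remark~\ref{rem:distinct_extended_sets} and Proposition~\ref{prop:translation_disjointness}, this forces $a=a'$, $\bt_a=\bt'_a$ and $\tau=\tau'$. This gives asymptotic uncorrelatedness across distinct $a_r$, and
\[
\EE\,\Ucal_q(a)^2=\frac{1}{N_q^2}\sum_{\bt_a}\sum_{\tau}\prod_{k}\Big(\sum_{i,j,i',j'}\sigma_{ii'}\sigma_{jj'}\Big)\cdots .
\]
Done carefully, each factor $k$ contributes $\EE[x_{i,t}x_{i',t}]\EE[x_{j,t-\tau}x_{j',t-\tau}]=\sigma_{ii'}\sigma_{jj'}$, where the pairs $(i,j)$ and $(i',j')$ are shared across all $k$ inside one summand. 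Summing gives $\sum_{i,j,i',j'}(\sigma_{ii'}\sigma_{jj'})^a=\big(\sum_{i,i'}\sigma_{ii'}^a\big)^2=\|\bSigma_0\|_{\ell_a}^{2a}$ for even $a$. Multiplying by the $qN_q$ index pairs yields exactly $\sigma^2(a)$.

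\emph{Conditional variance and Lindeberg.} It remains to verify
\[
\sum_s\EE[D_s^2\mid\Fscr_{s-1}]\xrightarrow{P}1,\qquad \sum_s\EE D_s^4\to0 .
\]
For the second, expand $\EE D_s^4$. Assumption~\ref{assum:high_order_jcm} reduces the fourth-order time-$s$ moments to $\kappa$ times pairings of $\sigma$'s, and the remaining moments are handled by independence. This gives $O\big(T^{-1}\big)$ relative to $\sigma^4$ once index sums like $\sum_{i}\big(\sum_j|\sigma_{ij}|\big)^k$ are bounded using Assumption~\ref{assum:spectral_bound}. For the first, show that $\Var\big(\sum_s\EE[D_s^2\mid\Fscr_{s-1}]\big)=o(1)$. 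That variance is an eighth-order moment sum over indices $(i,j)$ linked by $\sigma$ entries. The plan is to encode each term as a graph whose vertices are the coordinate indices and whose edges are the $\sigma_{\cdot\cdot}$ factors, with time coincidences tracked via free-variable counts $|\cdot|_\Fcal$. I would then bound closed walks by $\||\bSigma_0|\|_2^{\text{length}}$ times the number of components, and bound tree-like pieces by row sums $C_2\sqrt p$. Finally, compare against $\|\bSigma_0\|_{\ell_a}^{4a}=\Theta(p^2)$.

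\emph{Main obstacle.} This last step is the expected bottleneck. One must show that every connected moment pattern in the conditional-variance expansion either loses a free time index, giving a factor $1/T$, or loses a factor of $p$ in the index graph. Row sums of size $\sqrt p$ only just suffice here, so the graph must be organised so that at most one row-sum bound is used per lost cycle.
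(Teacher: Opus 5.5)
Your proposal follows essentially the same route as the paper: Cram\'er--Wold plus a martingale CLT whose differences are the summands with largest time point $h$ (exactly the paper's $A_{T,h,a}$), the exact variance and vanishing cross-covariances from the distinct extended sets of Remark~\ref{rem:distinct_extended_sets}, and a graph encoding of the index sums (free-variable time counts, row-sum bounds $C_2\sqrt{p}$ for tree parts, spectral-norm bounds for chains and cycles) to control the conditional variance and the fourth moments. Two small corrections: the covariances across distinct $a_r$ vanish exactly, not just asymptotically, and since $\|\bSigma_0\|_{\ell_a}^{a}=\Theta(p)$, the scale in the conditional-variance step is $\|\bSigma_0\|_{\ell_a}^{4a}=\Theta(p^4)$, not $\Theta(p^2)$.
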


To implement the test proposed in Theorem~\ref{thm:H_0}, we introduce the following moment estimator $\hat{\sigma}(a)$ for $\sigma(a)$. This result guarantees that the asymptotic distribution in Theorem~\ref{thm:H_0} remains valid when $\sigma(a)$ is replaced by $\hat{\sigma}(a)$.

\begin{theorem} \label{thm:var_estimator_consist}
Under the null hypothesis $H_0$ in \eqref{eq:H_0} and Assumptions~\ref{assum:moment_cond}-\ref{assum:spectral_bound}, the estimator
\[
\hat{\sigma}(a) = \sqrt{q} N_q(T,a)^{-3/2} \sum_{\bt_a \in \Ccal_q(T,a)} \sum_{i,j=1}^p \prod_{k=1}^a x_{i,t_k} x_{j,t_k}
\]
is consistent in probability for $\sigma(a)$ as $T, p \to \infty$.
\end{theorem}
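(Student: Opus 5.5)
We prove the equivalent ratio statement $\hat\sigma(a)/\sigma(a)\to 1$ in probability. Both quantities carry the common factor $\sqrt{q}\,N_q(T,a)^{-1/2}$. Indeed $\hat\sigma(a)=\sqrt{q}\,N_q(T,a)^{-1/2}\,V_a$ with
$$
V_a=\frac{1}{N_q(T,a)}\sum_{\bt_a\in\Ccal_q(T,a)}\sum_{i,j=1}^p\prod_{k=1}^a x_{i,t_k}x_{j,t_k},
$$
and $\sigma(a)=\sqrt{q}\,N_q(T,a)^{-1/2}\|\bSigma_0\|_{\ell_a}^a$. It therefore suffices to show $V_a/\|\bSigma_0\|_{\ell_a}^a\to1$ in probability. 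Since $\|\bSigma_0\|_{\ell_a}^a=\Theta(p)$ by the remark after Assumption~\ref{assum:spectral_bound}, by Chebyshev it is enough to prove two things: $\EE V_a=\|\bSigma_0\|_{\ell_a}^a$ and $\Var(V_a)=o(p^2)$.

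\textbf{Mean.} Under $H_0$ the $\x_t$ are i.i.d., and the entries of $\bt_a$ are distinct. Hence
$$
\EE\prod_k x_{i,t_k}x_{j,t_k}=\prod_k \EE(x_{i,t_k}x_{j,t_k})=\sigma_{ij}^a .
$$
Because $a$ is even, $\sigma_{ij}^a=|\sigma_{ij}|^a$. Summing over $(i,j)$ gives $\EE V_a=\|\bSigma_0\|_{\ell_a}^a$ exactly.

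\textbf{Variance: reduction to overlap counts.} Write $\Var(V_a)$ as $N_q(T,a)^{-2}$ times a double sum over pairs $(\bt_a,\bt_a')$. Classify each pair by $r=|S(\bt_a)\cap S(\bt_a')|$.
\begin{itemize}
\item If $r=0$, independence makes the covariance vanish.
\item For $1\le r\le a$, the number of pairs is $O(N_q(T,a)^2T^{-r})$, the standard U-statistic count with $a$ fixed.
\end{itemize}
The covariance for a pair with overlap $r$ equals
$$
\sum_{i,j,i',j'}\Big(\sigma_{ij}^{a-r}\sigma_{i'j'}^{a-r}\,\big[\Pi_{i,j,i',j'}\big]^r-\sigma_{ij}^a\sigma_{i'j'}^a\Big).
$$
Here I use $\EE(x_ix_jx_{i'}x_{j'})=\Pi_{i,j,i',j'}$, which Assumption~\ref{assum:high_order_jcm} gives as $\kappa(\sigma_{ij}\sigma_{i'j'}+\sigma_{ii'}\sigma_{jj'}+\sigma_{ij'}\sigma_{ji'})$. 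Thus it suffices to bound each such covariance by $O(p^2)$. This yields $\Var(V_a)=O(p^2/T)=o(p^2)$.

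\textbf{Main obstacle: the $O(p^2)$ covariance bound.} Expand $[\Pi_{i,j,i',j'}]^r$ multinomially. The term using only $\sigma_{ij}\sigma_{i'j'}$ gives $\kappa^r\sigma_{ij}^a\sigma_{i'j'}^a$. Together with the subtracted term it contributes at most $(|\kappa|^r+1)\|\bSigma_0\|_{\ell_a}^{2a}=O(p^2)$. Every other term contains at least one factor $\sigma_{ii'}\sigma_{jj'}$ or $\sigma_{ij'}\sigma_{ji'}$.

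This is the delicate point, because crude row-sum bounds fail. Using only $\max_i\sum_j|\sigma_{ij}|\le C_2\sqrt p$ on $\sum|\sigma_{ij}||\sigma_{i'j'}|$ gives $O(p^3)$, which is too large. So I will handle the remaining terms in two steps.
\begin{enumerate}
\item Bound all but one factor of each type by $C_3$ via $|\sigma|^m\le C_3^{m-1}|\sigma|$. Two subcases arise.
\begin{itemize}
\item If $r<a$, I keep one factor of $|\sigma_{ij}|$ and one of $|\sigma_{i'j'}|$.
\item If $r=a$, those factors may be absent. In that case I keep the pairing factors themselves, e.g.\ $\sum|\sigma_{ii'}|^a|\sigma_{jj'}|^a=\|\bSigma_0\|_{\ell_a}^{2a}$.
\end{itemize}
\item The surviving expressions are closed 4-cycles in the index graph, e.g.\ $\sum|\sigma_{ij}||\sigma_{jj'}||\sigma_{j'i'}||\sigma_{i'i}|=\tr(|\bSigma_0|^4)\le pC_1^4$. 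Alternatively they are products of two $\ell_a$-sums, which are $O(p^2)$. This is where Assumption~\ref{assum:spectral_bound}(1) and the graph viewpoint enter.
\end{enumerate}

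A finite case check over the multinomial patterns, of which there are finitely many since $a$ is fixed, then gives the $O(p^2)$ bound uniformly in $r$. This completes the variance estimate and hence the consistency.
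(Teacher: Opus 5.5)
Your proposal is correct and follows essentially the same route as the paper: Chebyshev applied to the ratio $\hat\sigma(a)/\sigma(a)$, exact unbiasedness from distinct time indices and even $a$, and a second-moment bound obtained by splitting pairs $(\bt_a,\bu_a)$ by overlap size, with the fourth-moment terms controlled through $\|\bSigma_0\|_{\ell_a}^{2a}$ and the 4-cycle bound $\tr(|\bSigma_0|^4)=O(p)$. The differences are cosmetic: you work with covariances, so disjoint pairs drop out immediately, and you expand $\Pi^r$ multinomially; the paper instead uses $|\Pi_{i,j,i',j'}|^s\le 3^{s-1}|\kappa|^s(\cdots)$ to reduce to three terms and compares $\EE[r(a)^2]$ with $1$ directly.
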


\subsection{Asymptotic performance under alternatives} \label{sec:power_analysis}

We assume that $\x_t$ follows the model:
\begin{equation} \label{eq:MA_1}
\x_t = \z_t + \z_{t-1},
\end{equation}
for which the lag-0 and lag-1 covariance matrices are given by $\bSigma_0 = \EE[\x_t \x_t^\top] = 2\I_p$ and $\bSigma_1 = \EE[\x_t \x_{t-1}^\top] = \I_p$. This leads to the following theorem:
\begin{theorem} \label{thm:H_1}
Under the alternative model \eqref{eq:MA_1} and condition \ref{assum:moment_cond}, as $T,p \to \infty$, the following convergence holds:
$$
\frac{\Ucal_q(a)}{\sigma(a)} \convd \infty.
$$
Consequently, the test is consistent, i.e., for any significance level $\alpha \in (0,1)$, the power of the test tends to one:
$$
\lim_{T,p \to \infty} \mathrm{Pr}\left(\frac{\Ucal_q(a)}{\sigma(a)} > z_{1-\alpha}\right) = 1,
$$
where $z_{1-\alpha}$ is the $(1-\alpha)$-quantile of the standard normal distribution.
\end{theorem}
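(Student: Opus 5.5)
We prove Theorem~\ref{thm:H_1} by a mean--variance argument. We show that under model \eqref{eq:MA_1} the mean $\EE[\Ucal_q(a)]$ grows strictly faster than both $\sigma(a)$ and the standard deviation of $\Ucal_q(a)$. Chebyshev's inequality then gives $\Ucal_q(a)/\sigma(a)\convd\infty$. Consistency of the test follows at once, since $z_{1-\alpha}$ is fixed.

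Here $\bSigma_0=2\I_p$, so $\|\bSigma_0\|_{\ell_a}^{2a}=(2^a p)^2=4^a p^2$ and $\sigma(a)=\sqrt{q}\,2^a p\,N_q(T,a)^{-1/2}$. In particular $\sigma(a)\to0$ because $N_q(T,a)=\Theta(T^a)$.

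\textbf{Step 1: the mean.} Fix $\bt_a\in\Ccal_q(T,a)$. Because $t_{k+1}-t_k>q\ge\tau$, each block $(x_{\cdot,t_k},x_{\cdot,t_k-\tau})$ depends only on $\z_{t_k},\z_{t_k-1},\z_{t_k-\tau},\z_{t_k-\tau-1}$. For a one-dependent process, blocks with different $k$ are therefore independent. Indeed, $t_{k+1}-\tau-1\ge t_k+q+1-\tau-1\ge t_k$, and equality would share $\z_{t_k}$. So we must be careful: the blocks are independent only when $t_{k+1}-t_k>\tau+1$, while for $\tau=q$ and gap exactly $q+1$ they share one $\z$. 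We split the sum according to whether some adjacent gap equals $q+1$. The tuples with such a gap number $O(T^{a-1})$, and their contribution is bounded by crude fourth-moment bounds, which is negligible. For the remaining tuples the expectation factorizes as $\prod_{k}\EE[x_{i,t_k}x_{j,t_k-\tau}]=(\sigma_{i,j,\tau})^a$. This equals $\delta_{ij}$ when $\tau=1$ and $0$ when $\tau\ge2$. Hence
\[
\EE[\Ucal_q(a)]=p\,(1+o(1)).
\]
Then $\EE\Ucal_q(a)/\sigma(a)\asymp N_q(T,a)^{1/2}/(\sqrt q\,2^a)\to\infty$.

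\textbf{Step 2: the variance.} We expand
\[
\Var(\Ucal_q(a))=N_q^{-2}\sum_{\bt_a,\bt_a'}\sum_{\tau,\tau'}\sum_{i,j,i',j'}\Cov\Bigl(\prod_k x_{i,t_k}x_{j,t_k-\tau},\;\prod_k x_{i',t'_k}x_{j',t'_k-\tau'}\Bigr).
\]
The covariance vanishes unless the extended index sets $S_\tau(\bt_a)$ and $S_{\tau'}(\bt_a')$, enlarged by one step for the MA(1) lag, overlap. Remark~\ref{rem:distinct_extended_sets} and Proposition~\ref{prop:translation_disjointness} let us count free time variables. With $r\ge1$ overlapping blocks, there are $O(T^{2a-r})$ pairs of tuples. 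For the component indices, $\A_\ell$ is diagonal, so coordinates decouple. A nonzero covariance then forces $i=j$ and $i'=j'$ in the non-overlapping blocks (those with $\tau=1$), or index coincidences in the overlapping ones. This gives at most $O(p^2)$ index configurations, each with bounded value by Assumption~\ref{assum:moment_cond}; the finite fourth moment suffices because each $z$ appears at most to a bounded power per coordinate, and overlapping coordinates with $i=i'$ reduce powers. Therefore
\[
\Var(\Ucal_q(a))=O\bigl(p^2T^{2a-1}/T^{2a}\bigr)=O(p^2/T)=o\bigl((\EE\Ucal_q(a))^2\bigr)=o(p^2).
\]

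\textbf{Conclusion and main obstacle.} Chebyshev's inequality gives $\Ucal_q(a)/\EE\Ucal_q(a)\convd1$, so $\Ucal_q(a)/\sigma(a)\convd\infty$ and the power tends to one. The main obstacle is the combinatorial bookkeeping in Step~2. We must check that overlapping configurations never produce a factor $p^3$ or more; the danger is cross terms $i\ne j$ that pair across the two tuples. We first treat the worst case, $r=1$ with the overlap in a single block, directly. Higher-order moments of $z$ such as $\EE z^8$ appear only when many blocks overlap. If such moments are not available from Assumption~\ref{assum:moment_cond}, we would truncate $z_{it}$ at level $(pT)^{1/4}$ and control the remainder separately.
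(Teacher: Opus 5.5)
Your proof is correct and has the same skeleton as the paper's: a mean of order $p$, so $\EE\Ucal_q(a)/\sigma(a)\asymp N_q(T,a)^{1/2}$, followed by Chebyshev. The difference is how you bound the variance, and that difference matters. The paper splits $\EE[\Ucal_q(a)^2]$ by index pattern. The cross-coordinate patterns $i_1=i_2\neq j_1=j_2$ and $i_1=j_2\neq i_2=j_1$ force every $t_k$ to match some $u_l$, which gives $O(p^2T^{-a})$. The all-equal pattern is bounded crudely by $O(p)$. The paper therefore ends with $\Var(Y_{T,p})/\mu_{T,p}^2=O(p^{-1}+T^{-a})$ and genuinely needs $p\to\infty$. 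You instead use two facts:
\begin{itemize}
\item a covariance vanishes unless some block of $\bt_a$ shares a $z$ with a block of $\bu_a$, which leaves only $O(T^{2a-1})$ pairs;
\item only $O(p^2)$ index patterns survive.
\end{itemize}
This is cruder on the cross-coordinate terms but sharper on the diagonal ones. It gives $\Var(\Ucal_q(a))/(\EE\Ucal_q(a))^2=O(T^{-1})$, so you only need $T\to\infty$. You also correctly spot the $z$ shared by consecutive blocks when $\tau=q$ and $t_{k+1}-t_k=q+1$, which the paper does not discuss.

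Three points need tightening.

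\textbf{Step~1.} As written, you bound the exceptional tuples crudely over all $p^2$ pairs $(i,j)$. That gives only $O(p^2/T)$, which is not $o(p)$ when $p/T$ stays bounded away from zero, the main regime here. Fix this with independence of coordinates. For $i\neq j$ and every tuple,
$\EE[\prod_k x_{i,t_k}x_{j,t_k-\tau}]=\EE[\prod_k x_{i,t_k}]\,\EE[\prod_k x_{j,t_k-\tau}]=0$,
because consecutive $t_k$ differ by at least $2$. Only the $p$ diagonal pairs remain, and the exceptional tuples then contribute $O(p/T)$.

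\textbf{Step~2.} The same observation justifies your $O(p^2)$ count cleanly: an index value that occurs exactly once among $(i,j,i',j')$ makes the covariance zero. Use this instead of the block-by-block phrasing.

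\textbf{Truncation.} The truncation contingency is unnecessary. The $2a$ times in $S_\tau(\bt_a)$ are distinct, so each time has multiplicity at most $2$ in $S_\tau(\bt_a)\uplus S_{\tau'}(\bu_a)$. Each $z_{i,s}$ enters only through $x_{i,s}$ and $x_{i,s+1}$, so it appears to power at most $4$, which Assumption~\ref{assum:moment_cond} covers.

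Separately, your aside that $\sigma(a)\to0$ is false in general, since $\sigma(a)\asymp pT^{-a/2}$. You never use it, so it does not affect the proof.
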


\begin{remark}
The consistency of the joint test follows from the scalar convergence $\Ucal_q(a_k) \xrightarrow{P} \infty$ for each $k$, without requiring independence among $\{\Ucal_q(a_k)\}_{k=1}^K$. For any finite $\M = (M_1, \dots, M_K)^\top$,
\begin{equation} \label{eq:joint_proof}
\Prb\left(\bigcup_{k=1}^K \{\Ucal_q(a_k) \le M_k\}\right)
\le \sum_{k=1}^K \Prb(\Ucal_q(a_k) \le M_k) \to 0, \qquad T \to \infty,
\end{equation}
where the inequality follows from Boole's Inequality and the convergence from $\Ucal_q(a_k) \xrightarrow{P} \infty$. This establishes $\left(\Ucal_q(a_1), \dots, \Ucal_q(a_K)\right) \xrightarrow{P} (\infty, \dots, \infty)$. Consequently, for any non-decreasing function $\Gcal$ with $\Gcal(\x) \to \infty$ as $\min_k x_k \to \infty$ (e.g., minimum or sum), the Continuous Mapping Theorem implies $\Gcal\left(\Ucal_q(a_1), \dots, \Ucal_q(a_K)\right) \xrightarrow{P} \infty$, so the test power converges to 1.
\end{remark}

\subsection{Algorithm for statistic computation}

We consider the computation of the statistic $\Ucal_q^{(\ell)}(a)$ under pairwise spacing constraints, where $\ell$ may take the form of a tuple $(i, j, \tau)$. The statistic is defined as
$$ 
\Ucal_q^{(\ell)}(a) := \sum_{\bt_a \in \Ccal_q(T,a)} \prod_{k=1}^a s_{\ell,t_k}, 
$$
where $s_{\ell,t_k}$ denotes a random variable depending on both $\ell$ and $t_k$. Owing to its modularity, this formulation can be readily adapted to a wide range of statistical quantities by choosing appropriate forms for the term $s_{\ell,t_k}$. Illustratively, for $\hat{\sigma}(a)$, this is accomplished by setting $s_{\ell,t} = x_{i,t} x_{j,t}$ with $\ell \in \{(i,j): 1 \leq i, j \leq p\}$.

To improve computational efficiency, we propose Algorithm \ref{alg:dp}, where we define a state variable $\textit{dp}(k,t)$ to represent the cumulative product sum of all valid sequences of length $k$ ending at index $t$, where each sequence satisfies the constraint $|t - r| > 2q$ for all preceding indices $r$. Based on this definition, we formulate the transition equation as
$$
\textit{dp}(k,t) = s_{\ell,t} \cdot \sum_{r=1}^{t-q-1} \textit{dp}(k-1,r).
$$
To further accelerate computations, we introduce an optimization technique via the construction of a prefix sum array:
$$ 
\textit{prefix}(k-1,t) = \sum_{r=1}^t \textit{dp}(k-1,r),
$$
which simplifies the transition equation to $\textit{dp}(k,t) = s_{\ell,t} \cdot \textit{prefix}(k-1, t-q-1)$. This approach reduces the complexity of Algorithm~\ref{alg:dp} from $O(aT^2)$ to $O(aT)$, yielding a significant computational improvement.

\begin{longalgo}[Calculate the statistic with constrained indices]
\label{alg:dp}
\SetAlgoLined
\KwIn{$\textit{s\_list}$, $a$, $q$}
\KwOut{$\Ucal_q^{(\ell)}(a)$}

$T \gets \text{length}(\textit{s\_list})$\;
Initialize $\textit{dp\_prev}$ as a zero vector of size $T$\;
Initialize $\textit{prefix\_prev}$ as a zero vector of size $T$\;

\For{$t \gets q+1$ to $T$}{
$\textit{dp\_prev}(t) \gets \textit{s\_list}(t)$\; 
$\textit{prefix\_prev}(t) \gets \textit{prefix\_prev}(t-1) + \textit{dp\_prev}(t)$\;
}

\For{$k \gets 2$ to $a$}{
Initialize $\textit{dp\_curr}$ as a zero vector of size $T$\;
Initialize $\textit{prefix\_curr}$ as a zero vector of size $T$\;

\For{$t \gets q+1+(k-1)(2q+1)$ to $T$}{
$\textit{r\_max} \gets t-q-1$\;
$\textit{dp\_curr}(t) \gets \textit{s\_list}(t) \times \textit{prefix\_prev}(\textit{r\_max})$\;
$\textit{prefix\_curr}(t) \gets \textit{prefix\_curr}(t-1) + \textit{dp\_curr}(t)$\;
}
$\textit{dp\_prev} \gets \textit{dp\_curr}$\;
$\textit{prefix\_prev} \gets \textit{prefix\_curr}$\;
}

\KwRet{$\textit{prefix\_prev}(T)$} 
\end{longalgo}

\section{Simulations} \label{sec:simu}

In this section, we conduct Monte Carlo simulations to evaluate the finite-sample performance of the proposed test. Throughout, the nominal significance level is set to 5\%, and the number of replications is 2000 for all experiments unless otherwise specified.

\subsection{Empirical Size}

The data are generated as $\x_t = \bSigma_0^{1/2} \z_t$, with $\z_t$ ($t = 1, \dots, T$) being independent and identically distributed. To evaluate the size performance under various conditions, we consider different distributions for $\z_t$ and covariance structures for $\bSigma_0$. \\ Distribution of $\z_t$:
\begin{itemize}
\item[(I)] $z_{it} \stackrel{\text{i.i.d.}}{\sim} N(0, 1)$ (the Gaussian baseline);
\item[(II)] $z_{it} \stackrel{\text{i.i.d.}}{\sim} \text{Gamma}(4, 0.5) - 2$, with $\EE(z_{it}) = 0$, $\Var(z_{it}) = 1$, and kurtosis $\nu_4(z_{it}) = 4.5$ (a non-Gaussian robustness check).
\end{itemize}
Covariance structure $\bSigma_0$:
\begin{itemize}
\item[(III)] $\bSigma_0 = \I_p$ (independent variables);
\item[(IV)] $\bSigma_0 = \frac{4}{p} \A_0 \A_0^\top$ with $a_{ij} \sim U(-1,1)$ i.i.d. (dependent variables).
\end{itemize}
To align with the joint asymptotic framework where $p$ and $T$ diverge simultaneously, we consider fixed ratios $p/T \in \{0.5, 1.0, 1.5\}$. For each ratio, we let $T \in \{100, 200, 400, 800\}$, yielding the corresponding $p$ values. Tables \ref{tab:empirical_size_Gaussian} - \ref{tab:empirical_size_non-Gaussian} reports the empirical sizes under representative configurations. The minimum empirical size is 0.038 and the maximum is 0.064, indicating that the proposed test statistic exhibits satisfactory size performance across all scenarios.

\begin{table}[htbp]
\centering
\caption{Empirical sizes (in percentage) of the proposed test $\Ucal_1(a)$ under the null hypothesis at the $5\%$ nominal level, with $p/T \in \{0.5, 1.0, 1.5\}$, $T \in \{100, 200, 400, 800\}$, and $\bSigma_0 \in \{\I_p, \frac{4}{p} \A_0 \A_0^\top\}$. Results are based on $2000$ Monte Carlo replications for Gaussian innovations.}
\label{tab:empirical_size_Gaussian}
\begin{tabular}{lllcccccc}
\toprule
& & & \multicolumn{3}{c}{$\bSigma_0 = \I_p$} & \multicolumn{3}{c}{$\bSigma_0 = \frac{4}{p} \A_0 \A_0^\top$} \\ \cmidrule(lr){4-6} \cmidrule(lr){7-9}
$p/T$ & $p$ & $T$ & $a = 2$ & $a = 4$ & $a = 6$ & $a = 2$ & $a = 4$ & $a = 6$ \\ \midrule
0.5 & 50 & 100 & 5.20 & 4.85 & 4.25 & 5.40 & 4.75 & 4.20 \\
0.5 & 100 & 200 & 4.75 & 5.70 & 5.15 & 5.40 & 5.05 & 5.25 \\
0.5 & 200 & 400 & 5.35 & 6.40 & 5.60 & 4.95 & 5.05 & 5.65 \\
0.5 & 400 & 800 & 5.10 & 5.00 & 5.40 & 4.75 & 5.05 & 5.30 \\ [1.0ex]
1.0 & 100 & 100 & 4.25 & 5.70 & 5.90 & 5.60 & 5.40 & 5.75 \\
1.0 & 200 & 200 & 5.80 & 5.25 & 5.35 & 4.20 & 4.70 & 6.00 \\
1.0 & 400 & 400 & 5.65 & 5.10 & 4.55 & 5.25 & 5.55 & 6.20 \\
1.0 & 800 & 800 & 5.20 & 5.40 & 5.55 & 4.55 & 5.85 & 5.65 \\ [1.0ex]
1.5 & 150 & 100 & 4.35 & 5.75 & 5.55 & 4.40 & 4.70 & 5.50 \\
1.5 & 300 & 200 & 5.45 & 6.25 & 5.00 & 5.20 & 4.50 & 6.10 \\
1.5 & 600 & 400 & 5.65 & 5.20 & 4.50 & 4.85 & 4.90 & 4.50 \\
1.5 & 1200 & 800 & 4.90 & 5.20 & 5.30 & 5.30 & 5.10 & 5.55 \\ \bottomrule
\end{tabular}
\end{table}

\begin{table}[htbp]
\centering
\caption{Empirical sizes (in percentage) of the proposed test $\Ucal_1(a)$ under the null hypothesis at the $5\%$ nominal level, with $p/T \in \{0.5, 1.0, 1.5\}$, $T \in \{100, 200, 400, 800\}$, and $\bSigma_0 \in \{\I_p, \frac{4}{p} \A_0 \A_0^\top\}$. Results are based on $2000$ Monte Carlo replications for non-Gaussian innovations.}
\label{tab:empirical_size_non-Gaussian}
\begin{tabular}{lllcccccc}
\toprule
& & & \multicolumn{3}{c}{$\bSigma_0 = \I_p$} & \multicolumn{3}{c}{$\bSigma_0 = \frac{4}{p} \A_0 \A_0^\top$} \\ \cmidrule(lr){4-6} \cmidrule(lr){7-9}
$p/T$ & $p$ & $T$ & $a = 2$ & $a = 4$ & $a = 6$ & $a = 2$ & $a = 4$ & $a = 6$ \\ \midrule
0.5 & 50 & 100 & 5.35 & 5.20 & 4.50 & 5.10 & 4.90 & 4.05 \\
0.5 & 100 & 200 & 5.30 & 5.60 & 5.25 & 5.10 & 5.00 & 5.55 \\
0.5 & 200 & 400 & 5.05 & 5.40 & 5.25 & 5.05 & 4.60 & 6.05 \\
0.5 & 400 & 800 & 5.25 & 4.85 & 4.95 & 4.60 & 5.00 & 5.15 \\ [1.0ex]
1.0 & 100 & 100 & 5.10 & 4.45 & 3.80 & 5.90 & 5.00 & 5.25 \\
1.0 & 200 & 200 & 5.10 & 4.55 & 5.25 & 4.75 & 5.80 & 5.10 \\
1.0 & 400 & 400 & 4.50 & 4.95 & 5.50 & 4.70 & 5.15 & 4.65 \\
1.0 & 800 & 800 & 5.35 & 4.15 & 5.40 & 4.90 & 5.05 & 4.65 \\ [1.0ex]
1.5 & 150 & 100 & 4.95 & 5.45 & 5.10 & 5.20 & 4.80 & 5.05 \\
1.5 & 300 & 200 & 4.70 & 4.95 & 4.00 & 4.95 & 4.90 & 5.10 \\
1.5 & 600 & 400 & 4.80 & 5.10 & 4.70 & 4.90 & 5.60 & 5.30 \\
1.5 & 1200 & 800 & 5.00 & 4.45 & 4.60 & 4.90 & 5.00 & 5.55 \\ \bottomrule
\end{tabular}
\end{table}

\subsection{Finite Sample Performance Comparison}

To evaluate the finite-sample performance of the proposed test statistics, we conduct simulation studies under both VAR(1) and VMA(1) alternative models. We compare the following test statistics: (i) the proposed $\Ucal_q(a)$ statistics with $a \in \{2, 4, 6\}$; (ii) the adaptive test 
$$
\Ucal_{q}(\text{adp}) = \frac{1}{\sqrt{3}} \left\{\frac{\Ucal_q(2)}{\sigma(2)} + \frac{\Ucal_q(4)}{\sigma(4)} + \frac{\Ucal_q(6)}{\sigma(6)}\right\},
$$
where $\sigma(a)$ denotes the standard deviation of $\Ucal_q(a)$ under the null hypothesis; (iii) the max-type statistic $M_q$ \citep{Chang2017}; and (iv) the sum-type statistic $S_q$ \citep{Li2019}.

\paragraph{Common Settings}
The nominal significance level is $\alpha = 5\%$ with empirical power estimated from 2000 replications. Sample sizes are $T \in \{100, 200, 400\}$ and dimension-to-sample ratios are $p/T \in \{0.5, 1.0, 1.5\}$. The truncation parameter is set to $q = 1$ for all test statistics. We consider the two covariance structures for $\bSigma_0$ introduced previously, namely (III) and (IV). The coefficient matrix $\A$ takes two forms:
\begin{itemize}
\item[(V)] Dense alternative: $\A = \diag(0.2, \dots, 0.2, 0, \dots, 0)$ with $d = \lfloor0.95p\rfloor$ non-zero diagonal entries;
\item[(VI)] Sparse alternative: $\A = \diag(0.2, \dots, 0.2, 0, \dots, 0)$ with only $d = \max(1, \lfloor0.05p\rfloor)$ non-zero diagonal entries.
\end{itemize}

\paragraph{VAR(1) Model}
Data are generated from $\x_t = \bSigma_0^{1/2} \y_t$, where $\y_t = \A \y_{t-1} + \z_t$ with $\z_t \sim N(\0, \I_p)$. 

\paragraph{VMA(1) Model}
Data are generated from $\x_t = \bSigma_0^{1/2} \w_t$, where $\w_t = \z_t + \A \z_{t-1}$ with $\z_t \sim N(\0, \I_p)$.

\paragraph{Analysis of Simulation Results}
Tables \ref{tab:power_var}-\ref{tab:power_vma} summarize the simulation outcomes, highlighting the superior performance of the proposed $\Ucal_{q}(\text{adp})$. Compared to its standalone components ($\Ucal_q(2)$, $\Ucal_q(4)$, and $\Ucal_q(6)$), $\Ucal_{q}(\text{adp})$ yields better results in the vast majority of experimental settings. Notably, while $\Ucal_q(6)$ and $M_q$ occasionally exhibit sub-optimal performance in low-dimensional spaces, $\Ucal_{q}(\text{adp})$ effectively mitigates these limitations. Furthermore, the robustness of $\Ucal_{q}(\text{adp})$ is particularly evident when examining temporal dependencies. Although $M_q$ and $S_q$ demonstrate a clear preference for sparse and dense temporal correlations, respectively, $\Ucal_{q}(\text{adp})$ maintains consistently high performance across the entire spectrum of temporal densities.

\begin{table}[!htbp]
\centering
\caption{Empirical power (in percentage) of the test statistics $\Ucal_q(2)$, $\Ucal_q(4)$, $\Ucal_q(6)$, $\Ucal_q(\text{adp})$, $M_q$, and $S_q$ under the VAR(1) alternative at the $5\%$ nominal level, with $p/T \in \{0.5, 1.0, 1.5\}$, $T \in \{100, 200, 400\}$, and $\bSigma_0 = \I_p$. The coefficient matrix $\A$ takes dense (V) or sparse (VI) forms. Results are based on 2000 Monte Carlo replications.}
\label{tab:power_var}
\begin{tabular}{llll D{.}{.}{3.2} D{.}{.}{3.2} D{.}{.}{3.2} D{.}{.}{3.2} D{.}{.}{3.2} D{.}{.}{3.2}}
\toprule
Scenario & $p/T$ & $p$ & $T$ & \multicolumn{1}{c}{$\mathcal{U}_q(2)$} & \multicolumn{1}{c}{$\mathcal{U}_q(4)$} & \multicolumn{1}{c}{$\mathcal{U}_q(6)$} & \multicolumn{1}{c}{$\mathcal{U}_q(\text{adp})$} & \multicolumn{1}{c}{$M_q$} & \multicolumn{1}{c}{$S_q$} \\ \midrule
\multirow[t]{18}{*}{(V)} & 0.5 & 50 & 100 & 82.45 & 69.95 & 39.05 & 86.30 & 33.15 & 99.90 \\
& 0.5 & 100 & 200 & 100.00 & 100.00 & 99.90 & 100.00 & 95.20 & 100.00 \\
& 0.5 & 200 & 400 & 100.00 & 100.00 & 100.00 & 100.00 & 100.00 & 100.00 \\
& 1.0 & 100 & 100 & 85.25 & 77.55 & 41.80 & 92.65 & 29.25 & 100.00 \\
& 1.0 & 200 & 200 & 100.00 & 100.00 & 100.00 & 100.00 & 96.45 & 100.00 \\
& 1.0 & 400 & 400 & 100.00 & 100.00 & 100.00 & 100.00 & 100.00 & 100.00 \\
& 1.5 & 150 & 100 & 88.50 & 79.25 & 45.45 & 95.35 & 29.80 & 100.00 \\
& 1.5 & 300 & 200 & 100.00 & 100.00 & 100.00 & 100.00 & 96.70 & 100.00 \\
& 1.5 & 600 & 400 & 100.00 & 100.00 & 100.00 & 100.00 & 100.00 & 100.00 \\ [1.0ex]
\multirow[t]{18}{*}{(VI)} & 0.5 & 50 & 100 & 6.66 & 8.12 & 6.66 & 8.88 & 6.34 & 6.36 \\
& 0.5 & 100 & 200 & 8.28 & 21.00 & 24.12 & 29.06 & 18.60 & 8.16 \\
& 0.5 & 200 & 400 & 15.60 & 79.12 & 94.16 & 95.06 & 87.64 & 16.26 \\
& 1.0 & 100 & 100 & 6.50 & 8.22 & 7.42 & 9.16 & 7.06 & 6.86 \\
& 1.0 & 200 & 200 & 9.66 & 19.74 & 25.42 & 30.46 & 21.04 & 9.90 \\
& 1.0 & 400 & 400 & 14.60 & 84.38 & 98.54 & 98.68 & 93.76 & 15.82 \\
& 1.5 & 150 & 100 & 6.66 & 8.10 & 7.56 & 9.38 & 5.62 & 6.76 \\
& 1.5 & 300 & 200 & 9.74 & 19.82 & 27.98 & 33.44 & 18.44 & 10.62 \\
& 1.5 & 600 & 400 & 14.90 & 86.22 & 99.66 & 99.58 & 94.52 & 17.16 \\ \bottomrule
\end{tabular}
\end{table}

\begin{table}[!htbp]
\centering
\caption{Empirical power (in percentage) of the test statistics $\Ucal_q(2)$, $\Ucal_q(4)$, $\Ucal_q(6)$, $\Ucal_q(\text{adp})$, $M_q$, and $S_q$ under the VMA(1) alternative at the $5\%$ nominal level, with $p/T \in \{0.5, 1.0, 1.5\}$, $T \in \{100, 200, 400\}$, and $\bSigma_0 = \I_p$. The coefficient matrix $\A$ takes dense (V) or sparse (VI) forms. Results are based on 2000 Monte Carlo replications.}
\label{tab:power_vma}
\begin{tabular}{llll D{.}{.}{3.2} D{.}{.}{3.2} D{.}{.}{3.2} D{.}{.}{3.2} D{.}{.}{3.2} D{.}{.}{3.2}}
\toprule
Scenario & $p/T$ & $p$ & $T$ & \multicolumn{1}{c}{$\Ucal_q(2)$} & \multicolumn{1}{c}{$\Ucal_q(4)$} & \multicolumn{1}{c}{$\Ucal_q(6)$} & \multicolumn{1}{c}{$\Ucal_q(\text{adp})$} & \multicolumn{1}{c}{$M_q$} & \multicolumn{1}{c}{$S_q$} \\ \midrule
\multirow[t]{18}{*}{Dense} & 0.5 & 50 & 100 & 76.68 & 63.28 & 33.14 & 80.92 & 25.52 & 99.24 \\
& 0.5 & 100 & 200 & 99.94 & 100.00 & 99.80 & 100.00 & 84.94 & 100.00 \\
& 0.5 & 200 & 400 & 100.00 & 100.00 & 100.00 & 100.00 & 100.00 & 100.00 \\
& 1.0 & 100 & 100 & 75.42 & 67.44 & 35.60 & 86.20 & 20.14 & 100.00 \\
& 1.0 & 200 & 200 & 99.88 & 100.00 & 100.00 & 100.00 & 88.16 & 100.00 \\
& 1.0 & 400 & 400 & 100.00 & 100.00 & 100.00 & 100.00 & 100.00 & 100.00 \\
& 1.5 & 150 & 100 & 77.76 & 68.58 & 37.72 & 89.32 & 15.50 & 100.00 \\
& 1.5 & 300 & 200 & 99.84 & 100.00 & 100.00 & 100.00 & 91.42 & 100.00 \\
& 1.5 & 600 & 400 & 100.00 & 100.00 & 100.00 & 100.00 & 100.00 & 100.00 \\ [1.0ex]
\multirow[t]{18}{*}{Sparse} & 0.5 & 50 & 100 & 6.40 & 7.46 & 6.12 & 8.10 & 5.68 & 5.76 \\
& 0.5 & 100 & 200 & 8.60 & 18.42 & 21.22 & 25.60 & 14.22 & 8.76 \\
& 0.5 & 200 & 400 & 14.50 & 72.34 & 91.16 & 92.28 & 79.20 & 15.34 \\
& 1.0 & 100 & 100 & 6.92 & 7.30 & 7.62 & 9.08 & 5.78 & 6.38 \\
& 1.0 & 200 & 200 & 8.28 & 16.14 & 22.48 & 26.04 & 17.84 & 8.38 \\
& 1.0 & 400 & 400 & 13.92 & 74.90 & 97.22 & 97.26 & 83.88 & 15.06 \\
& 1.5 & 150 & 100 & 6.36 & 7.60 & 7.00 & 8.66 & 5.46 & 6.42 \\
& 1.5 & 300 & 200 & 8.66 & 16.50 & 21.78 & 26.24 & 10.94 & 9.28 \\
& 1.5 & 600 & 400 & 13.34 & 77.16 & 99.02 & 98.78 & 84.00 & 14.96 \\ \bottomrule
\end{tabular}
\end{table}

\section{Proofs} \label{sec:pf}

\begin{proof}[Proof of Proposition \ref{prop:translation_disjointness}.]
First, we note that the translation $a_i \mapsto a_i - c_i$ preserves the dependence relation among the variables $a_i$. Specifically, if $a_j$ and $a_k$ are dependent, then the difference $(a_j - c_j) - (a_k - c_k) = (a_j - a_k) - (c_j - c_k)$ remains a fixed constant, as both $a_j - a_k$ (by dependence) and $c_j - c_k$ (by fixed translation coefficients) are constants. As a result, $\Fcal(B)$, the free variable set for $B$ derived from translating $\Fcal(A)$, is valid, and its cardinality satisfies $|\Fcal(B)| = |\Fcal(A)| = |A|_{\Fcal}$.

Next, we apply the inclusion-exclusion principle to the union of the two free variable sets $\Fcal(A)$ and $\Fcal(B)$, which gives:
$$
|\Fcal(A) \cup \Fcal(B)| = |\Fcal(A)| + |\Fcal(B)| - |\Fcal(A) \cap \Fcal(B)|.
$$
To establish the desired result, we use proof by contradiction: suppose $\Fcal(A) \cap \Fcal(B) \neq \emptyset$, meaning there exists a common variable $v$ that belongs to both $\Fcal(A)$ and $\Fcal(B)$. By the construction of $\Fcal(B)$, which is translated from $\Fcal(A)$, $v \in \Fcal(B)$ implies $v = a_{i_2} - c_{i_2}$ for some $a_{i_2} \in \Fcal(A)$. Meanwhile, since $v \in \Fcal(A)$, there exists some $a_{i_1} \in \Fcal(A)$ such that $v = a_{i_1}$. Equating these two expressions for $v$ yields $a_{i_2} - a_{i_1} = c_{i_2}$. This equality implies $a_{i_1}$ and $a_{i_2}$ are dependent. However, $\Fcal(A)$ is a set of linearly independent free variables, and by definition, dependent variables cannot coexist in a linearly independent set unless they are identical. Thus, $a_{i_1} = a_{i_2}$. Substituting $a_{i_1} = a_{i_2}$ back into the equality yields $c_{i_2} = 0$, which contradicts the assumption of a non-trivial translation, where $c_i \neq 0$ for all relevant $i$. Therefore, our initial supposition $\Fcal(A) \cap \Fcal(B) \neq \emptyset$ is false, so $\Fcal(A) \cap \Fcal(B) = \emptyset$. Substituting this into the inclusion-exclusion formula completes the proof.
\end{proof}

\begin{proof}[Proof of Theorem~\ref{thm:H_0}.]

To establish the joint limiting distribution of the U-statistics, fix finite integers $a_1, \dots, a_m$. By the Cramér-Wold theorem, it suffices to show that any fixed linear combination converges to a normal distribution. For constants $t_1, \dots, t_m$ with $\sum_{r=1}^m t_r^2 = 1$, define
\begin{equation} \label{eq:ZT_def}
Z_T = \sum_{r=1}^m t_r \frac{\Ucal_q(a_r)}{\sigma(a_r)}.
\end{equation}
We prove $Z_T \xrightarrow{d} N(0,1)$ via Theorem 35.12 in \cite{Billingsley2012}. Let $\Fscr_0 = \{\emptyset, \Omega\}$ and $\Fscr_h = \sigma\{\x_1, \ldots, \x_h\}$ for $h = 1, \dots, T$, with $\EE_h(\cdot)$ denoting the conditional expectation given $\Fscr_h$. Define martingale differences 
$$
D_{T,h} = (\EE_h - \EE_{h-1})Z_T,
$$
since $\EE_0(\cdot) = \EE(\cdot)$ and $\EE(Z_T) = 0$, we have $Z_T = \sum_{h=1}^T D_{T,h}$. The martingale CLT requires verifying:
\begin{equation} \label{eq:suff_cond}
\sum_{h=1}^T \EE_{h-1}(D_{T,h}^2) \convd 1, \qquad
\sum_{h=1}^T \EE(D_{T,h}^4) \to 0.
\end{equation}
Express the terms in \eqref{eq:suff_cond} using the operators
\begin{equation} \label{eq:A_form}
A_{T,h,a} = (\EE_h - \EE_{h-1})\left\{ \frac{\Ucal_q(a)}{\sigma(a)} \right\}
= \left\{\begin{aligned}
& 0, & h < aq + a, \\
& \frac{1}{\sigma(a) N_q(T,a)} \sum_{\bt_{a-1} \in \Ccal_q(h-q-1, a-1)} \sum_{\tau=1}^q \sum_{i,j=1}^{p} x_{i,h} x_{j,h-\tau} \prod_{k=1}^{a-1} x_{i,t_k} x_{j,t_k-\tau}, & h \geq aq + a,
\end{aligned}\right.
\end{equation}
which yields the representations:
\[
D_{T,h} = \sum_{r=1}^m t_r A_{T,h,a_r}, \qquad
\EE_{h-1}(D_{T,h}^2) = \sum_{1 \leq r_1, r_2 \leq m} t_{r_1} t_{r_2} \EE_{h-1}\left( A_{T,h,a_{r_1}} A_{T,h,a_{r_2}} \right).
\]
First, we show
\begin{equation} \label{eq:cond_var_exp}
\EE\left[ \sum_{h=1}^T \EE_{h-1}(D_{T,h}^2) \right] = 1.
\end{equation}
Thus, it remains to establish the key estimates:
\begin{equation} \label{eq:cond_var_var}
\Var\left[ \sum_{h=1}^T \EE_{h-1}(D_{T,h}^2) \right] \to 0,
\end{equation}
and
\begin{equation} \label{eq:4th_moment_sum}
\sum_{h=1}^T \EE(D_{T,h}^4) \to 0,
\end{equation}
which we do by applying \eqref{eq:A_form}. In what follows, we prove \eqref{eq:cond_var_exp}, \eqref{eq:cond_var_var}, and \eqref{eq:4th_moment_sum} separately.

\subhead{Proof of \eqref{eq:cond_var_exp}.}
We first establish that for any indices $1 \leq h_1 \neq h_2 \leq T$, the expectation $\EE\left[ D_{T,h_1} D_{T,h_2} \right] = 0$. Without loss of generality, assume $h_1 < h_2$. Since $\EE_{h_1} Z_T \in \Fscr_{h_2}$, we have
\begin{align*}
\EE\left[ D_{T,h_1} D_{T,h_2} \right] 
&= \EE\left[ \left(\EE_{h_1} Z_T - \EE_{h_1-1} Z_T\right) \left(\EE_{h_2} Z_T - \EE_{h_2-1} Z_T\right) \right] \\
&= \EE\left[ \EE_{h_1} Z_T \cdot \EE_{h_2} Z_T - \EE_{h_1-1} Z_T \cdot \EE_{h_2} Z_T - \EE_{h_1} Z_T \cdot \EE_{h_2-1} Z_T + \EE_{h_1-1} Z_T \cdot \EE_{h_2-1} Z_T \right] \\
&= \EE\left[ (\EE_{h_1} Z_T - \EE_{h_1-1} Z_T) Z_T \right] - \EE\left[ (\EE_{h_1} Z_T - \EE_{h_1-1} Z_T) Z_T \right] = 0.
\end{align*}
This orthogonality property leads to
$$
\sum_{h=1}^T \EE\left[D_{T,h}^2\right] 
= \EE\left[ \left(\sum_{h=1}^T D_{T,h}\right)^2 \right] 
= \Var(Z_T),
$$
where the final equality follows from the construction that $\EE[D_{T,h}] = 0$ and $Z_T = \sum_{h=1}^T D_{T,h}$. From the definition of $Z_T$ in \eqref{eq:ZT_def}, the variance of $Z_T$ is given by
\[
\Var(Z_T)
= \Var\left[\sum_{r=1}^m t_r \frac{\Ucal_q(a_r)}{\sigma(a_r)}\right]
= \sum_{r=1}^m t_r^2 \Var\left[\frac{\Ucal_q(a_r)}{\sigma(a_r)}\right] + \sum_{1 \leq r_1 \neq r_2 \leq m} t_{r_1} t_{r_2} \Cov\left[\frac{\Ucal_q(a_{r_1})}{\sigma(a_{r_1})}, \frac{\Ucal_q(a_{r_2})}{\sigma(a_{r_2})}\right].
\]
To compute $\Var(Z_T)$, we need to evaluate both $\Var\left[\Ucal_q(a)\right]$ and $\Cov\left[\Ucal_q(a_1), \Ucal_q(a_2)\right]$.

We first derive the variance of $\Ucal_q(a)$. Under $H_0$, we have $\EE\left[\Ucal_q(a)\right] = 0$, so its variance coincides with the second moment
$$
\Var\left[\Ucal_q(a)\right]
= \EE\left[\Ucal_q(a)^2\right]
= \frac{1}{N_q(T,a)^2} \sum_{\bt_a, \bu_a \in \Ccal_q(T,a)} \sum_{\tau_1, \tau_2 = 1}^q \sum_{i_1, j_1, i_2, j_2 = 1}^p \EE\left[ \prod_{k=1}^{a} \left( x_{i_1,t_k} x_{j_1,t_k-\tau_1} \right) \left( x_{i_2,u_k} x_{j_2,u_k-\tau_2} \right) \right],
$$
where $\bt_a = (t_1, \dots, t_a)$ and $\bu_a = (u_1, \dots, u_a)$ denote index tuples. To determine the variance of $\Ucal_q(a)$, it suffices to evaluate the expression: 
\begin{equation} \label{eq:EE_aa}
\EE\left[ \prod_{k=1}^{a} \left( x_{i_1,t_k} x_{j_1,t_k-\tau_1} \right) \left( x_{i_2,u_k} x_{j_2,u_k-\tau_2} \right) \right].
\end{equation}
By Remark \ref{rem:distinct_extended_sets}, \eqref{eq:EE_aa} vanishes for distinct tuples $\bt_a$, $\bu_a$ or distinct indices $\tau_1$, $\tau_2$. It is non-vanishing only for $\bt_a = \bu_a$ and $\tau_1 = \tau_2$, reducing to $\eqref{eq:EE_aa} = \sigma_{i_1,i_2}^a \sigma_{j_1,j_2}^a$. The final variance is thus
$$
\Var\left[\Ucal_q(a)\right] 
= \frac{q}{N_q(T,a)} \sum_{i_1, j_1, i_2, j_2 = 1}^{p} \sigma_{i_1,i_2}^a \sigma_{j_1,j_2}^a
= \sigma^2(a).
$$ 

We then compute the covariance between the statistics,
\[
\Cov\left[\Ucal_q(a_1), \Ucal_q(a_2)\right]
= \frac{1}{N_q(T,a_1) N_q(T,a_2)} \sum_{\substack{\bt_{a_1} \in \Ccal_q(T,a_1) \\ \bu_{a_2} \in \Ccal_q(T,a_2)}} \sum_{\tau_1, \tau_2 = 1}^{q} \sum_{i_1, j_1, i_2, j_2 = 1}^{p} \EE\left[ \prod_{k=1}^{a_1} \left( x_{i_1,t_k} x_{j_1,t_k-\tau_1} \right) \prod_{k=1}^{a_2} \left( x_{i_2,u_k} x_{j_2,u_k-\tau_2} \right) \right].
\]
The key observation is that $a_1 \neq a_2$ implies the index tuples $\bt_{a_1}$ and $\bu_{a_2}$ are necessarily distinct. Applying Remark \ref{rem:distinct_extended_sets}, we obtain $\Cov\left[\Ucal_q(a_1), \Ucal_q(a_2)\right] = 0$. Combining the above results, we finally obtain $\Var(Z_T) \to \sum_{r=1}^m t_r^2 = 1$.

\subhead{Proof of \eqref{eq:cond_var_var}.}
Applying the Cauchy-Schwartz inequality yields the following variance bound,
\[
\Var\left[ \sum_{h=1}^T \EE_{h-1}(D_{T,h}^2) \right] 
= \Var\left[ \sum_{h=1}^T \sum_{r_1, r_2 = 1}^{m} t_{r_1} t_{r_2} \EE_{h-1}\left( A_{T,h,a_{r_1}} A_{T,h,a_{r_2}} \right) \right]
\leq C T^2 \max_{\substack{1 \leq h \leq T \\ 1 \leq r_1,r_2 \leq m}} \Var\left[ \EE_{h-1}\left( A_{T,h,a_{r_1}} A_{T,h,a_{r_2}} \right) \right].
\]
We will show that $\Var\left[ \EE_{h-1}\left( A_{T,h,a_1} A_{T,h,a_2} \right) \right] = o(T^{-2})$ for every $1 \leq h \leq T$. Since
\begin{align*}
\EE_{h-1}\left( A_{T,h,a_1} A_{T,h,a_2} \right)
=& \frac{1}{c(a_1) c(a_2)} \sum_{\substack{\bt_{a_1-1} \in \Ccal_q(h-q-1, a_1-1) \\ \bu_{a_2-1} \in \Ccal_q(h-q-1, a_2-1)}} \sum_{\tau_1, \tau_2=1}^q \sum_{i_1,j_1,i_2,j_2=1}^{p} \EE\left(x_{i_1,h} x_{i_2,h}\right) \times \left(x_{j_1,h-\tau_1} x_{j_2,h-\tau_2}\right) \\ & \times \prod_{k=1}^{a_1-1} \left(x_{i_1,t_k} x_{j_1,t_k-\tau_1}\right) \times \prod_{k=1}^{a_2-1} \left(x_{i_2,u_k} x_{j_2,u_k-\tau_2}\right),
\end{align*}
where $c(a) = \sigma(a) N_q(T,a)$, by a similar argument as in the proof of \eqref{eq:cond_var_exp}, we obtain the expectation of $\EE_{h-1}\left( A_{T,h,a_1} A_{T,h,a_2} \right)$:
\begin{equation} \label{eq:expect_A_T_h_case}
\EE\left( A_{T,h,a_1} A_{T,h,a_2} \right) = \begin{cases}
N_q(h-q-1,a-1)/N_q(T,a), \qquad & a_1 = a_2 = a, \\
0, \qquad & a_1 \neq a_2.
\end{cases}
\end{equation}
The variance $\Var\left[ \EE_{h-1}\left( A_{T,h,a_1} A_{T,h,a_2} \right) \right]$ can be rewritten as
\[
\Var\left[ \EE_{h-1}\left( A_{T,h,a_1} A_{T,h,a_2} \right) \right]
= \EE\left[ \EE_{h-1}\left( A_{T,h,a_1} A_{T,h,a_2} \right)^2 \right] - \left[\EE\left( A_{T,h,a_1} A_{T,h,a_2} \right)\right]^2.
\]

Next, we compute $\EE\left[ \EE_{h-1}\left( A_{T,h,a_1} A_{T,h,a_2} \right)^2 \right]$. Let 
$$
\btau = (\tau_1,\tau_2,\tau_3,\tau_4) \in [q]^4, \qquad
\bi = (i_1,i_2,i_3,i_4) \in [p]^4, \qquad 
\bj = (j_1,j_2,j_3,j_4) \in [p]^4,
$$
where $[q]=\{1,\dots,q\}$ and $[p]=\{1,\dots,p\}$. Then,
\[
\EE\left[ \EE_{h-1}\left( A_{T,h,a_1} A_{T,h,a_2} \right)^2 \right]
= \frac{1}{c(a_1)^2 c(a_2)^2} \sum_{\substack{\bt_{a_1-1}, \tilde{\bt}_{a_1-1} \in \Ccal_q(h-q-1, a_1-1) \\ \bu_{a_2-1}, \tilde{\bu}_{a_2-1} \in \Ccal_q(h-q-1, a_2-1)}} \sum_{\btau \in [q]^4} \EE\left[Q(\bt_{a_1-1}, \tilde{\bt}_{a_1-1}, \bu_{a_2-1}, \tilde{\bu}_{a_2-1}, \btau)\right],
\]
with
\begin{align*}
Q(\bt_{a_1-1}, \tilde{\bt}_{a_1-1}, \bu_{a_2-1}, \tilde{\bu}_{a_2-1}, \btau)
=& \sum_{\bi, \bj \in [p]^4} \sigma_{i_1,i_2} \sigma_{i_3,i_4} \times \prod_{k=1}^4 x_{j_k,h-\tau_k} \times \prod_{k=1}^{a_1-1} \left(x_{i_1,t_k} x_{j_1,t_k-\tau_1} x_{i_3,\tilde{t}_k} x_{j_3,\tilde{t}_k-\tau_3}\right) \\ &\times \prod_{k=1}^{a_2-1} \left(x_{i_2,u_k} x_{j_2,u_k-\tau_2} x_{i_4,\tilde{u}_k} x_{j_4,\tilde{u}_k-\tau_4}\right).
\end{align*}
We now write $Q(\bt_{a_1-1}, \tilde{\bt}_{a_1-1}, \bu_{a_2-1}, \tilde{\bu}_{a_2-1}, \btau)$ as $Q(\bt, \tilde{\bt}, \bu, \tilde{\bu}, \btau)$ for simplicity, and assume $\EE\left[Q(\bt, \tilde{\bt}, \bu, \tilde{\bu}, \btau)\right] \neq 0$. To this end, we introduce the following notation:
\begin{align*}
S_1 =& S_{\tau_1}(\bt_{a_1-1}) \cup S_{\tau_2}(\bu_{a_2-1}) \cup S_{\tau_3}(\tilde{\bt}_{a_1-1}) \cup S_{\tau_4}(\tilde{\bu}_{a_2-1}), \qquad
S_2 = S(\bt_{a_1-1}) \cup S(\bu_{a_2-1}) \cup S(\tilde{\bt}_{a_1-1}) \cup S(\tilde{\bu}_{a_2-1}), \\
S_3 =& \Delta_{\tau_1}(\bt_{a_1-1}) \cup \Delta_{\tau_2}(\bu_{a_2-1}) \cup \Delta_{\tau_3}(\tilde{\bt}_{a_1-1}) \cup \Delta_{\tau_4}(\tilde{\bu}_{a_2-1}),
\end{align*}
where $\Delta_{\tau}(\bt_a) = S_{\tau}(\bt_a) \setminus S(\bt_a)$ and $\cup$ denotes the standard set union. Correspondingly, we define the multiset
\[
S_1^\mathrm{mult}
= S_{\tau_1}(\bt_{a_1-1}) \uplus S_{\tau_2}(\bu_{a_2-1}) \uplus S_{\tau_3}(\tilde{\bt}_{a_1-1}) \uplus S_{\tau_4}(\tilde{\bu}_{a_2-1}),
\]
where $\uplus$ denotes the multiset union preserving element multiplicities. 
With the above notation, we characterize conditions ensuring $\EE\left[Q(\bt, \tilde{\bt}, \bu, \tilde{\bu}, \btau)\right] \neq 0$. Since $\min(h - \tau_1, \ldots, h - \tau_4) > \max(t_{a_1-1}, \tilde{t}_{a_1-1}, u_{a_2-1}, \tilde{u}_{a_2-1})$, we have
\[
\EE\left[Q(\bt, \tilde{\bt}, \bu, \tilde{\bu}, \btau)\right]
= \sum_{\bi, \bj \in [p]^4} \EE\left(\prod_{k=1}^4 x_{j_k,h-\tau_k}\right) \times \EE\left(\text{other terms}\right),
\]
so $\EE\left[Q(\bt, \tilde{\bt}, \bu, \tilde{\bu}, \btau)\right] \neq 0$ implies $\EE\left(\prod_{k=1}^4 x_{j_k,h-\tau_k}\right) \neq 0$, which yields the condition on $\btau$:
\begin{equation} \label{eq:tau_mult_cond}
\forall c, \quad 
\bigl|\left\{i \in \{1,2,3,4\} : \tau_i = c\right\}\bigr| \in \{0,2,4\}.
\end{equation}
For any such tuple $\btau$ and any $t_1 \in S_{\tau_1}(\bt_{a_1-1})$, the element $t_1$ must appear in $S_1^\mathrm{mult}$ outside of $S_{\tau_1}(\bt_{a_1-1})$, otherwise 
\[
\EE\left[Q(\bt, \tilde{\bt}, \bu, \tilde{\bu}, \btau)\right]
= \sum_{\bi, \bj \in [p]^4} \EE\left(x_{i_1,t_1}\right) \times \EE\left(\text{other terms}\right)
= 0.
\]
Thus $|S_1| \leq 2a_1 + 2a_2 - 4$, which implies
\begin{equation} \label{eq:bound_S2}
|S_2|_{\Fcal} \leq a_1 + a_2 - 2.
\end{equation}
We proceed by contradiction. Suppose $|S_2|_{\Fcal} > a_1 + a_2 - 2$. By an intermediate result from the proof of Proposition~\ref{prop:translation_disjointness}, $|S_3|_{\Fcal} = |S_2|_{\Fcal} > a_1 + a_2 - 2$. By the definitions of $S(\bt_a)$ and $\Delta_{\tau}(\bt_a)$,
\[
|S_1| = |S_2 \cup S_3| \geq |\Fcal(S_2) \cup \Fcal(S_3)| 
= |S_2|_{\Fcal} + |S_3|_{\Fcal} > 2a_1 + 2a_2 - 4,
\]
where the third equality follows from Proposition~\ref{prop:translation_disjointness}. This contradiction implies \eqref{eq:bound_S2}.

We then show that 
\begin{equation} \label{eq:expect_Q_order}
\EE\left[Q(\bt, \tilde{\bt}, \bu, \tilde{\bu}, \btau)\right] = O(p^4).
\end{equation}
To this end, assume $\EE\left[Q(\bt, \tilde{\bt}, \bu, \tilde{\bu}, \btau)\right] \neq 0$ and simplify its expression first. By Assumption \ref{assum:high_order_jcm}, the multiplicity of each distinct element in the multiset $S_1^{\text{mult}}$ must be either 2 or 4. If there are exactly $m$ elements with multiplicity 4 in the multiset $S_1^{\text{mult}} \uplus \{h - \tau_1, \dots, h - \tau_4\}$, then $\EE\left[Q(\bt, \tilde{\bt}, \bu, \tilde{\bu}, \btau)\right]$ can be expressed as a sum of $3^m$ terms, where $0 \leq m \leq 2\min(a_1, a_2) - 1$, and each term takes the form
\[
\kappa^m \sum_{\bi, \bj \in [p]^4} \prod_{\{u,v\} \subseteq \{i_k, j_k\}_{k=1}^4} \sigma_{u,v}^{\gamma_{u,v}},
\]
where $\gamma_{u,v}$ denotes the corresponding exponent, which we set to $0$ if the term contains no $\sigma_{u,v}$. These exponents satisfy
\[
\sum_{\substack{v \in \{i_k, j_k\}_{k=1}^4 \\ v \neq u}} \gamma_{u,v} 
= \begin{cases}
a_1, \qquad & u \in \{i_1, j_1, i_3, j_3\}, \\
a_2, \qquad & u \in \{i_2, j_2, i_4, j_4\}.
\end{cases}
\]
It is straightforward to verify that
\[
\left|\sum_{\bi, \bj \in [p]^4} \prod_{\{u,v\} \subseteq \{i_k, j_k\}_{k=1}^4} \sigma_{u,v}^{\gamma_{u,v}}\right|
\leq \sum_{\bi, \bj \in [p]^4} \prod_{\{u,v\} \subseteq \{i_k, j_k\}_{k=1}^4} \left|\sigma_{u,v}\right|^{\gamma_{u,v}}
\triangleq \zeta.
\]

We now prove that $\EE\left(\zeta\right) = O(p^4)$, which immediately implies \eqref{eq:expect_Q_order}. Using graph-theoretic arguments, we associate $\zeta$ with a graph $G$ defined on the vertex set $V = \{i_k, j_k\}_{k=1}^4$. Here, $\left|\sigma_{u,v}\right|$ represents the weight associated with vertices $u$ and $v$, and the non-negative integer $\gamma_{u,v}$ denotes the multiplicity of the edge between them. For brevity, we use the same letter (e.g., $u,v$) to denote both a vertex and its index value; for instance, $\sum_{u=1}^p$ means summing over all possible states of vertex $u$. The vertex degrees are determined by: 
\[
\deg(u) = \begin{cases}
a_1, \qquad & u \in \{i_1,j_1,i_3,j_3\}, \\
a_2, \qquad & u \in \{i_2,j_2,i_4,j_4\}.
\end{cases}
\]
Under this framework, the original product term coincides with the product of all edge weights in $G$. By the graph decomposition theorem, any graph $G$ can be uniquely decomposed into a disjoint union of connected components. Since connections between vertices exist only within each connected component, the global sum $\zeta$ factors into a product of contributions from individual connected components. Let $V_c \subseteq V$ be the vertex set of the $c$-th component, $n_c=|V_c|$ its number of vertices, $H_{n_c}$ the corresponding subgraph, and $E(H_{n_c})$ its edge set. For notational simplicity, we keep the symbol $V_c$ to denote both the vertex set and its values, so that $\sum_{V_c \in [p]^{|V_c|}}$ stands for summing over all independent assignments of values from $[p]$ to vertices in $V_c$. Under this convention, we define the local sum weight as
\[
W(H_{n_c}) = \sum_{V_c \in [p]^{|V_c|}} \prod_{\{u,v\} \in E(H_{n_c})} \left|\sigma_{u,v}\right|^{\gamma_{u,v}},
\]
and the original sum then decomposes as
\[
\zeta = \prod_c W(H_{n_c}).
\]

We next analyze $W(H_{n_c})$, a key factor in bounding $\zeta$. Since $G$ has no isolated vertices and $|V|=8$, the possible degree sequences of its connected components are exactly the following seven: (8), (4,4), (3,5), (2,6), (2,2,4), (2,3,3), (2,2,2,2). We now bound $W(H_{n_c})$ according to its size. For $n_c=2$, the component consists of only two vertices $u,v$ with both degrees equal to either $a_1$ or $a_2$. For convenience, assume both degrees equal $a_1$, then the edge multiplicity is $a_1$, and
\[
W(H_{n_c}) 
= \sum_{u,v=1}^p \left|\sigma_{u,v}\right|^{a_1}
\leq C \sum_{u,v=1}^p \left|\sigma_{u,v}\right|^2
= C \|\boldsymbol{\bSigma}_0\|_F^2
\leq C p \||\boldsymbol{\bSigma}_0|\|_2^2 
= O(p),
\]
where the second inequality follows from (2) of Assumption \ref{assum:spectral_bound}. From now on, we only consider the case $3 \le n_c \le 8$. \\
\subhead{Step 1.} 
We first simplify the expression of $W(H_{n_c})$ using a spanning tree argument. Let $H_{n_c}$ contain $n_c^{(1)}$ vertices of Type 1 (those in $\{i_1,j_1,i_3,j_3\}$) and $n_c^{(2)} = n_c - n_c^{(1)}$ vertices of Type 2. To simplify the analysis, we derive an upper bound for $W(H_{n_c})$. Since $H_{n_c}$ is connected, it contains at least one spanning tree. Let $T_{n_c}$ be any such spanning tree satisfying $V(T_{n_c}) = V_c$, $E(T_{n_c}) \subseteq E(H_{n_c})$, and $|E(T_{n_c})|=n_c-1$. We decompose the original product into a tree-edge part and a remainder part: for each edge $\{u,v\} \in T_{n_c}$, we retain one factor $\left|\sigma_{u,v}\right|$ as the tree contribution, and assign the remaining $\gamma_{u,v}-1$ parallel edges to the remainder; for edges not in $T_{n_c}$, all $\gamma_{u,v}$ edges go to the remainder. Then
\[
\prod_{\{u,v\} \in E(H_{n_c})} \left|\sigma_{u,v}\right|^{\gamma_{u,v}}
= \left(\prod_{\{u,v\} \in E(T_{n_c})} \left|\sigma_{u,v}\right|\right) \cdot \left(\prod_{\{u,v\} \in E(H_{n_c})} \left|\sigma_{u,v}\right|^{\gamma_{u,v}-\mathbf{1}_{\{u,v\} \in E(T_{n_c})}}\right).
\]
By (2) of Assumption \ref{assum:spectral_bound}, $\left|\sigma_{u,v}\right| \leq C$ for all $u, v \in [p]$. The remainder is at most $C^{M}$, where
\[
M = \sum_{\{u,v\} \in E(H_{n_c})} \left(\gamma_{u,v} - \mathbf{1}_{\{u,v\} \in E(T_{n_c})}\right)
= E_{\text{total}} - (n_c-1).
\]
Let $E_{\text{total}} = \sum_{\{u,v\} \subseteq V_c} \gamma_{u,v}$ be the total edge multiplicity of $H_{n_c}$. By the handshaking lemma,
\[
E_{\text{total}} = \frac{1}{2} \sum_{v \in V_c} \deg(v)
= \frac{1}{2} \bigl(n_c^{(1)}a_1 + n_c^{(2)}a_2\bigr)
\le 2a_1+2a_2.
\]
Hence, $M \le (2a_1+2a_2)-(n_c-1)$. Substituting into the sum yields
\[
W(H_{n_c}) \le C \sum_{V_c \in [p]^{|V_c|}} \prod_{\{u,v\} \in E(T_{n_c})} \left|\sigma_{u,v}\right|
\triangleq C W(T_{n_c}).
\]
\subhead{Step 2.} 
We next derive an upper bound for $W(T_{n_c})$. We take a diameter path of $T_{n_c}$ as the backbone $P = (v_1, \dots, v_L)$, where $L$ satisfies $3 \le L \le n_c$. By maximality of the diameter, no branches can emanate from the endpoints $v_1$ or $v_L$; all branches therefore arise only from internal nodes $v_k$ with $1 < k < L$. For each internal backbone node $v$, we define its branch factor $\mathcal{B}_v$ as
\[
\mathcal{B}_v 
= \prod_{u \in \text{child}(v)} \left[\sum_{u=1}^p \left|\sigma_{v,u}\right| \prod_{w \in \text{child}(u)} \left(\sum_{w=1}^p \left|\sigma_{u,w}\right| \prod_{z \in \text{child}(w)} \sum_{z=1}^p \left|\sigma_{w,z}\right|\right)\right]
\leq C \sqrt{p}^{\, \sum_{u \in \text{child}(v)} \Bigl(1 + \sum_{w \in \text{child}(u)} \bigl(1 + |\text{child}(w)|\bigr) \Bigr)},
\]
where $\operatorname{child}(v)$ denotes the neighbors of $v$ not on the backbone, each regarded as the root of a corresponding branch subtree. If $|\operatorname{child}(v)| = 0$, we set $\mathcal{B}_v = 1$ by convention, so the definition applies uniformly to branches of any depth. Notably, the second inequality in the above derivation follows exactly from (2) of Assumption \ref{assum:spectral_bound}. Let the total number of branch nodes over all internal backbone nodes $v_k$ ($k = 2,\dots,L-1$) be denoted by
\[
N_{\text{branch}} = \sum_{k=2}^{L-1} \sum_{u \in \text{child}(v_k)} \Bigl(1 + \sum_{w \in \text{child}(u)} \bigl(1 + |\text{child}(w)|\bigr) \Bigr).
\]
Since $T_{n_c}$ has $n_c$ nodes, the backbone has $L$ nodes, and all branch nodes are partitioned among the branches of different backbone nodes, we have $N_{\text{branch}} = n_c - L$. Accordingly, $W(T_{n_c})$ reduces to a chain product over the backbone:
\[
W(T) = \sum_{v_1,\dots,v_L=1}^p \Bigg(\prod_{k=1}^{L-1} \left|\sigma_{v_k,v_{k+1}}\right|\Bigg) \prod_{k=2}^{L-1} \mathcal{B}_{v_k}
\leq C \sum_{v_1,\dots,v_L=1}^p \Bigg(\prod_{k=1}^{L-1} \left|\sigma_{v_k,v_{k+1}}\right|\Bigg) \cdot \sqrt{p}^{\, n_c - L}
= C \sqrt{p}^{\, n_c - L} \cdot \mathbf{1}^{\top} |\boldsymbol{\Sigma}_0|^{L-1} \mathbf{1},
\]
Using (2) of Assumption \ref{assum:spectral_bound}, we obtain the upper bound
\[
\mathbf{1}^{\top} |\boldsymbol{\Sigma}_0|^{L-1} \mathbf{1} 
\leq p \||\boldsymbol{\Sigma}_0|^{L-1}\|_1 
\leq p\sqrt{p} \||\boldsymbol{\Sigma}_0|^{L-1}\|_2
\leq C p\sqrt{p}.
\]
This yields
\[
W(H_{n_c}) \le C \sqrt{p}^{\, n_c - L + 3} \leq C \sqrt{p}^{\, n_c}.
\]
Putting everything together, $\zeta$ admits the upper bound
\[
\zeta = \prod_c W(H_{n_c}) \leq \prod_c \sqrt{p}^{\, n_c} = O(p^4).
\]
A necessary condition for equality to hold is that the diameter of the spanning tree of each $H_{n_c}$ satisfies $L \leq 3$.

By \eqref{eq:bound_S2} and \eqref{eq:expect_Q_order}, we compute $\EE\left[ \EE_{h-1}\left( A_{T,h,a_1} A_{T,h,a_2} \right)^2 \right] = O(T^{-2})$. Since the expectation in \eqref{eq:expect_A_T_h_case} holds, we derive
\[
\Var\left[ \EE_{h-1}\left( A_{T,h,a_1} A_{T,h,a_2} \right) \right] = O(T^{-2}).
\]
Under condition \eqref{eq:tau_mult_cond}, we consider the following two disjoint cases:
\begin{enumerate}[label=(\arabic*), leftmargin=*, itemsep=0.3ex]
\item If all multiplicities are at most~$2$, i.e., $\forall c$, $\bigl|\{i\in\{1,2,3,4\} : \tau_i = c\}\bigr| \in \{0,2\}$, then taking $\tau_1 = \tau_2 \neq \tau_3 = \tau_4$ as a representative example yields
\[
\EE\left[Q(\bt, \tilde{\bt}, \bu, \tilde{\bu}, \btau)\right]
= \sum_{\bi, \bj \in [p]^4} \sigma_{i_1,i_2} \sigma_{i_3,i_4} \sigma_{j_1,j_2} \sigma_{j_3,j_4} \times \EE\left(\text{other terms}\right).
\]
\item
If all $\tau_i$ coincide, i.e., $\forall c$, $\bigl|\{i\in\{1,2,3,4\} : \tau_i = c\}\bigr| \in \{0,4\}$, then
\begin{align*}
\EE\left[Q(\bt, \tilde{\bt}, \bu, \tilde{\bu}, \btau)\right]
=& \kappa \sum_{\bi, \bj \in [p]^4} \sigma_{i_1,i_2} \sigma_{i_3,i_4} \sigma_{j_1,j_2} \sigma_{j_3,j_4} \times \EE\left(\text{other terms}\right) + \kappa \sum_{\bi, \bj \in [p]^4} \sigma_{i_1,i_2} \sigma_{i_3,i_4} \sigma_{j_1,j_3} \sigma_{j_2,j_4} \times \EE\left(\text{other terms}\right) \\ &+ \kappa \sum_{\bi, \bj \in [p]^4} \sigma_{i_1,i_2} \sigma_{i_3,i_4} \sigma_{j_1,j_4} \sigma_{j_2,j_3} \times \EE\left(\text{other terms}\right).
\end{align*}
\end{enumerate}
From the construction of $\EE\left[Q(\bt, \tilde{\bt}, \bu, \tilde{\bu}, \btau)\right]$, we observe that $G$ contains no connected component of order 3, and for every component $H_{n_c}$ with $n_c \ge 4$, the diameter $L$ of a diameter-maximizing spanning tree satisfies $L \ge 4$. This follows from the fact that any such component must include vertices from multiple fixed pairs defined by the four $\sigma$ factors in $\EE\left[Q(\bt, \tilde{\bt}, \bu, \tilde{\bu}, \btau)\right]$. Accordingly, in order for $\EE\left[Q(\bt, \tilde{\bt}, \bu, \tilde{\bu}, \btau)\right] = \Theta(p^4)$, the degree sequence of $G$ must be exactly $(2,2,2,2)$. Based on this reasoning, we have $\bt_{a_1-1} = \bu_{a_2-1}$ and $\tilde{\bt}_{a_1-1} = \tilde{\bu}_{a_2-1}$, which forces $a_1 = a_2$. Hence, we choose the assignment $\tau_1 = \tau_2 \neq \tau_3 = \tau_4$ to achieve $|S_2|_{\Fcal} = a_1 + a_2 - 2$. Under this setting, the corresponding contribution to the expectation attains the maximal order $\Theta(T^{-2})$. In summary:
\begin{enumerate}[label=(\arabic*), leftmargin=*, itemsep=0.3ex]
\item If $a_1 = a_2 = a$, then 
\[
\EE\left[ \EE_{h-1}\left(A_{T,h,a}^2\right)^2 \right] = \frac{a^2}{T^2} + o(T^{-2}), \qquad
\EE\left( A_{T,h,a}^2 \right) = \frac{a}{T} + o(T^{-1}), \qquad
\Var\left[ \EE_{h-1}\left( A_{T,h,a}^2 \right) \right] = o(T^{-2}).
\]
\item If $a_1 \neq a_2$, then
\[
\EE\left[ \EE_{h-1}\left( A_{T,h,a_1} A_{T,h,a_2} \right)^2 \right] = o(T^{-2}), \qquad
\EE\left( A_{T,h,a_1} A_{T,h,a_2} \right) = 0, \qquad
\Var\left[ \EE_{h-1}\left( A_{T,h,a_1} A_{T,h,a_2} \right) \right] = o(T^{-2}).
\]
\end{enumerate}
This completes the proof of \eqref{eq:cond_var_var}.

\subhead{Proof of \eqref{eq:4th_moment_sum}.}
By \eqref{eq:A_form}, the following equation holds:
\begin{equation} \label{eq:sum_4th_moment_D}
\sum_{h=1}^{T} \EE(D_{T,h}^{4}) 
= \sum_{h=1}^{T} \sum_{1 \leq r_1, \dots, r_4 \leq m} \left( \prod_{k=1}^{4} t_{r_k} \right) \times \EE\left( \prod_{k=1}^{4} A_{T,h,a_{r_k}} \right). 
\end{equation}
To establish \eqref{eq:4th_moment_sum}, it suffices to show that
\begin{equation} \label{eq:expect_product_A}
\EE\left( \prod_{k=1}^{4} A_{T,h,a_k} \right) = O(T^{-2}), \qquad
1 \leq h \leq T.
\end{equation}
We first simplify the above expression. By \eqref{eq:A_form},
\[
\EE\left( \prod_{k=1}^{4} A_{T,h,a_k} \right)
= \prod_{k=1}^4 c(a_k)^{-1} \times \sum_{\bt_{a_1-1}^{(1)}, \ldots, \bt_{a_4-1}^{(4)}} \sum_{\btau \in [q]^4} \EE\left[Q^{\star}\left(\bt_{a_1-1}^{(1)}, \bt_{a_2-1}^{(2)}, \bt_{a_3-1}^{(3)}, \bt_{a_4-1}^{(4)}, \btau\right)\right], 
\]
where $\max_{1 \leq k \leq 4} (2a_kq + a_k - q) \leq h \leq T$,
$$
\bt_{a_k-1}^{(k)} = (t_{k,1}, \dots, t_{k,a_k-1}) \in \Ccal_q(h-q-1, a_k-1), \quad 1 \leq k \leq 4,
$$
and
$$
Q^{\star}\left(\bt_{a_1-1}^{(1)}, \bt_{a_2-1}^{(2)}, \bt_{a_3-1}^{(3)}, \bt_{a_4-1}^{(4)}, \btau\right)
= \sum_{\bi, \bj \in [p]^4} \prod_{k=1}^{4} \left(x_{i_k,h} x_{j_k,h-\tau_k}\right) \times \prod_{k=1}^{4} \prod_{\ell=1}^{a_k-1} \left(x_{i_k,t_{k,\ell}} x_{j_k,t_{k,\ell}-\tau_k}\right).
$$
For notational convenience, we abbreviate $Q^{\star}\left(\bt_{a_1-1}^{(1)}, \bt_{a_2-1}^{(2)}, \bt_{a_3-1}^{(3)}, \bt_{a_4-1}^{(4)}, \btau\right)$ as $Q^{\star}\left(\bt^{(1)}, \bt^{(2)}, \bt^{(3)}, \bt^{(4)}, \btau\right)$.

Following a similar approach to the proof of \eqref{eq:cond_var_var}, we first consider the case $\EE\left[Q^{\star}\left(\bt^{(1)}, \bt^{(2)}, \bt^{(3)}, \bt^{(4)}, \btau\right)\right] \neq 0$. The non-vanishing of this expectation implies
\[
\EE\left( \prod_{k=1}^{4} x_{j_k,h-\tau_k} \right) \neq 0, \qquad
\EE\left( \prod_{k=1}^{4} \prod_{\ell=1}^{a_k-1} x_{i_k,t_{k,\ell}} x_{j_k,t_{k,\ell}-\tau_k} \right) \neq 0,
\]
which in turn yield \eqref{eq:tau_mult_cond} and
\[
\left| \bigcup_{k=1}^{4} S_{\tau_k}(\bt_{a_k-1}^{(k)}) \right|
\leq \sum_{k=1}^{4} (a_k - 1) \triangleq c_1
\]
respectively. Applying the same proof as for \eqref{eq:bound_S2}, with $a_1 + a_2 - 2$ replaced by $c_1 / 2$ and $2a_1 + 2a_2 - 4$ by $c_1$, we further deduce that
\begin{equation} \label{eq:df_of_the_union_sets}
\left| \bigcup_{k=1}^{4} S(\bt_{a_k-1}^{(k)}) \right|_{\Fcal} \leq c_1/2.
\end{equation}

Next, we prove that
\begin{equation} \label{eq:expect_Q*_order}
\EE\left[Q^{\star}\left(\bt^{(1)}, \bt^{(2)}, \bt^{(3)}, \bt^{(4)}, \btau\right)\right] = O(p^4).
\end{equation}
Again, we first simplify the expression for $\EE\left[Q^{\star}\left(\bt^{(1)}, \bt^{(2)}, \bt^{(3)}, \bt^{(4)}, \btau\right)\right]$. Suppose that in the multiset $\biguplus_{k=1}^{4} S_{\tau_k}(\bt_{a_k-1}^{(k)}) \;\uplus\; \{h - \tau_1, \dots, h - \tau_4\}$, there are exactly $m$ elements with multiplicity $4$. Then $\EE\left[Q^{\star}\left(\bt^{(1)}, \bt^{(2)}, \bt^{(3)}, \bt^{(4)}, \btau\right)\right]$ can be written as a sum of $3^{m+1}$ terms, where $0 \leq m \leq 2\min(a_1, a_2, a_3, a_4) - 1$, each of the form
\[
\kappa^{m+1} \sum_{\bi, \bj \in [p]^4} \prod_{\{u,v\} \subseteq \{i_k, j_k\}_{k=1}^4} \sigma_{u,v}^{\gamma_{u,v}^{\star}},
\]
Here $\gamma_{u,v}^{\star}$ denotes the corresponding exponent, taken to be $0$ if the term contains no factor $\sigma_{u,v}$. These exponents satisfy
\[
\sum_{v \neq u} \gamma_{u,v}^{\star} = a_k, \qquad u \in \{i_k,j_k\}, \qquad 1 \leq k \leq 4.
\]
A direct verification shows that
\[
\left|\sum_{\bi, \bj \in [p]^4} \prod_{\{u,v\} \subseteq \{i_k, j_k\}_{k=1}^4} \sigma_{u,v}^{\gamma_{u,v}^{\star}}\right|
\leq \sum_{\bi, \bj \in [p]^4} \prod_{\{u,v\} \subseteq \{i_k, j_k\}_{k=1}^4} \left|\sigma_{u,v}\right|^{\gamma_{u,v}^{\star}}
\triangleq \zeta^{\star}.
\]
We now prove that $\EE\left(\zeta^{\star}\right) = O(p^4)$, which immediately implies \eqref{eq:expect_Q*_order}. Note that the proof of $\EE\left(\zeta\right) = O(p^4)$ relies only on the finiteness of $a_1$ and $a_2$, and does not depend on their specific values. Therefore, the same argument can be applied with only minor modifications to obtain $\EE(\zeta^{\star}) = O(p^4)$. Combining \eqref{eq:df_of_the_union_sets} and \eqref{eq:expect_Q*_order} yields \eqref{eq:expect_product_A}, completing the proof of \eqref{eq:4th_moment_sum}.
\end{proof}

\begin{proof}[Proof of Theorem~\ref{thm:var_estimator_consist}.]
To establish that $r(a) = \hat{\sigma}(a) / \sigma(a) \convd 1$, it suffices to show that
\begin{equation} \label{eq:clt_cond_var_ratio}
\EE\left[r(a)\right] = 1, \qquad
\Var\left[r(a)\right] = \EE\left[r(a)^2\right] - \EE\left[r(a)\right]^2 \to 0,
\end{equation}
The first equality in the above display is straightforward to verify. We therefore focus on establishing the second convergence. Since $\EE[r(a)] = 1$, it suffices to show that $\EE[r(a)^2] \to 1$ in order to conclude $\Var[r(a)] \to 0$. Expanding $\EE[r(a)^2]$, we obtain
\[
\EE[r(a)^2]
= \frac{1}{N_q(T,a)^2 \|\bSigma_0\|_{\ell_a}^{2a}} \sum_{\bt_a, \bu_a \in \Ccal_q(T,a)} \sum_{i_1,j_1,i_2,j_2=1}^p \EE\left[\prod_{k=1}^a x_{i_1,t_k} x_{j_1,t_k} x_{i_2,u_k} x_{j_2,u_k}\right]
\triangleq \frac{1}{N_q(T,a)^2 \|\bSigma_0\|_{\ell_a}^{2a}} \sum_{\bt_a, \bu_a \in \Ccal_q(T,a)} \xi(\bt_a, \bu_a),
\]
where the quadruple sum $\xi(\bt_a, \bu_a)$ is the key term requiring sharp bounds. To facilitate the analysis, we define
\[
\xi_s 
\triangleq \xi_s(\bt_a, \bu_a)
= \sum_{i_1,j_1,i_2,j_2=1}^p \EE\left[\prod_{k=1}^a \left(x_{i_1,t_k} x_{j_1,t_k} x_{i_2,u_k} x_{j_2,u_k}\right) \cdot \1_{\left\{\left|S(\bt_a) \cap S(\bu_a)\right| = s\right\}}\right], \quad s \in\{0, \dots, a\}.
\]
This allows us to rewrite the sum as $\xi(\bt_a, \bu_a) = \sum_{s=0}^{a} \xi_s(\bt_a, \bu_a)$.

We proceed by deriving an upper bound on $|\xi_s|$,
\begin{align*}
|\xi_s| 
\leq& \sum_{i_1,j_1,i_2,j_2=1}^p \left|\sigma_{i_1,j_1}\right|^{a-s} \left|\sigma_{i_2,j_2}\right|^{a-s} \left|\Pi_{i_1,j_1,i_2,j_2}\right|^s \\
\leq& 3^{s-1} |\kappa|^s \sum_{i_1,j_1,i_2,j_2=1}^p \left|\sigma_{i_1,j_1}\right|^{a-s} \left|\sigma_{i_2,j_2}\right|^{a-s} \left(\left|\sigma_{i_1,j_1}\right|^s \left|\sigma_{i_2,j_2}\right|^s + \left|\sigma_{i_1,i_2}\right|^s \left|\sigma_{j_1,j_2}\right|^s + \left|\sigma_{i_1,j_2}\right|^s \left|\sigma_{j_1,i_2}\right|^s\right)
\triangleq \xi_{s,1} + \xi_{s,2} + \xi_{s,3}.
\end{align*}
We analyze the orders of $\xi_{s,1}$, $\xi_{s,2}$, and $\xi_{s,3}$ separately. For $s \in \{0, a\}$, it is easy to show that $\xi_{s,1}$, $\xi_{s,2}$, and $\xi_{s,3}$ are all $O(\|\bSigma_0\|_{\ell_a}^{2a})$. We hence only consider $s \in \{1, \dots, a-1\}$. Under Assumption \ref{assum:spectral_bound}, we obtain
\[
\xi_{s,1} 
= 3^{s-1} |\kappa|^s \sum_{i_1,j_1,i_2,j_2=1}^p \left|\sigma_{i_1,j_1}\right|^a \left|\sigma_{i_2,j_2}\right|^a
= \Theta(\|\bSigma_0\|_{\ell_a}^{2a}),
\]
and
\[
\xi_{s,2}
= 3^{s-1} |\kappa|^s \sum_{i_1,j_1,i_2,j_2=1}^p \left|\sigma_{i_1,j_1}\right|^{a-s} \left|\sigma_{i_2,j_2}\right|^{a-s} \left|\sigma_{i_1,i_2}\right|^s \left|\sigma_{j_1,j_2}\right|^s
\leq C \sum_{i_1,j_1,i_2,j_2=1}^p \left|\sigma_{i_1,j_1}\right| \left|\sigma_{j_1,j_2}\right| \left|\sigma_{j_2,i_2}\right| \left|\sigma_{i_2,i_1}\right|
= \tr\left(\left|\bSigma_0\right|^4\right)
= O(p).
\]
A similar argument yields $\xi_{s,3} = O(p)$. In summary, $|\xi_s| = O(\|\bSigma_0\|_{\ell_a}^{2a})$ for all $0 \leq s \leq a$, which implies
$$
\left|\xi(\bt_a, \bu_a)\right| = O(\|\bSigma_0\|_{\ell_a}^{2a}).
$$
Consequently,
\[
\EE\left[r(a)^2\right] 
= \EE\left[r(a)^2 \cdot \1_{\left\{\left|S(\bt_a) \cup S(\bu_a)\right| = 2a\right\}}\right] + \EE\left[r(a)^2 \cdot \1_{\left\{\left|S(\bt_a) \cup S(\bu_a)\right| < 2a\right\}}\right]
= 1 + o(1).
\]
This completes the proof of Theorem \ref{thm:var_estimator_consist}.
\end{proof}

\begin{proof}[Proof of Theorem \ref{thm:H_1}.]

Let $Y_{T,p} = \Ucal_q(a)/\sigma(a)$ be a sequence of random variables indexed by sample size $T$ and dimension $p$. By the definition of convergence in probability to infinity, we need to show that for any fixed $M > 0$,
\begin{equation} \label{eq:target}
\lim_{\min(T,p) \to \infty} \mathrm{Pr}\left(Y_{T,p} \le M\right) = 0.
\end{equation}
Fix any $M > 0$. Since $\mu_{T,p} = \EE[Y_{T,p}] = \Theta(T^{a/2}) \to \infty$ as $\min(T,p) \to \infty$, there exists an integer $N$ such that for all $T, p > N$, $\mu_{T,p} > M$. For each fixed pair $(T,p)$ with $T, p > N$, define $\varepsilon_{T,p} = \mu_{T,p} - M > 0$ and we have
$$
Y_{T,p} \le M \implies \mu_{T,p} - Y_{T,p} \geq \varepsilon_{T,p} > 0.
$$
Consequently, the absolute deviation satisfies
$$
\left|Y_{T,p} - \mu_{T,p}\right| \geq \varepsilon_{T,p}.
$$
For each fixed $(T,p)$, $Y_{T,p}$ has finite mean $\mu_{T,p}$ and finite variance. Thus, we may apply Chebyshev's Inequality,
$$
\Prb(Y_{T,p} \le M) 
\le \Prb(|Y_{T,p} - \mu_{T,p}| \ge \varepsilon_{T,p}) 
\le \frac{\Var(Y_{T,p})}{\varepsilon_{T,p}^2}.
$$
Furthermore, for some constant $C > 0$ and sufficiently large $(T,p)$, we have
\begin{equation} \label{eq:var_bound}
\Var(Y_{T,p}) \le Cp^{-1}T^a + C.
\end{equation}
Moreover,
$$
\varepsilon_{T,p}^2 
= (\mu_{T,p} - M)^2 
= \mu_{T,p}^2 \left(1 - \frac{M}{\mu_{T,p}}\right)^2 
\sim \mu_{T,p}^2, \qquad \min(T,p) \to \infty.
$$
Therefore, the upper bound satisfies
$$
\frac{\Var(Y_{T,p})}{\varepsilon_{T,p}^2} 
\le \frac{Cp^{-1}T^a + C}{\mu_{T,p}^2 \left(1 - \frac{M}{\mu_{T,p}}\right)^2} 
\to 0.
$$
This convergence holds because
$$
\frac{Cp^{-1}T^a + C}{\mu_{T,p}^2} \to 0, \qquad 
\left(1 - \frac{M}{\mu_{T,p}}\right)^2 \to 1.
$$
Therefore,
$$
\lim_{\min(T,p) \to \infty} \frac{\Var(Y_{T,p})}{\varepsilon_{T,p}^2} = 0.
$$
Combining the above inequalities, we conclude \eqref{eq:target}. This completes the proof that $Y_{T,p} \convd \infty$ as $\min(T,p) \to \infty$. Next, we proceed to prove \eqref{eq:var_bound}.

\subhead{Proof of \eqref{eq:var_bound}.}
To compute $\Var(Y_{T,p})$, we first calculate $\Var[\Ucal_q(a)]$:
$$
\Var[\Ucal_q(a)] 
= \EE[\Ucal_q(a)^2] - \EE[\Ucal_q(a)]^2
= \frac{1}{N_q(T,a)^2} \sum_{\bt_a, \bu_a \in \Ccal_q(T,a)} \sum_{\tau_1, \tau_2=1}^q \sum_{i_1,j_1,i_2,j_2=1}^{p} \EE\left(\prod_{k=1}^{a} x_{i_1,t_k} x_{j_1,t_k-\tau_1} x_{i_2,u_k} x_{j_2,u_k-\tau_2}\right) - p^2.
$$
If $\EE[\Ucal_q(a)^2] \neq 0$, we need to consider the following cases: (i) $i_1 = j_1 \neq i_2 = j_2$, (ii) $i_1 = i_2 \neq j_1 = j_2$, (iii) $i_1 = j_2 \neq i_2 = j_1$, and (iv) $i_1 = j_1 = i_2 = j_2$. We evaluate these cases respectively as follows. \\
For case (i), substituting the indices yields
$$
\EE[\Ucal_q(a)^2 \cdot \1_{\{i_1 = j_1 \neq i_2 = j_2\}}]
= \sum_{1 \leq i_1 \neq i_2 \leq p} \sigma_{i_1,i_1,1}^a \cdot \sigma_{i_2,i_2,1}^a
= p(p-1).
$$
For case (ii), we have
\begin{equation} \label{eq:var_bound_case2}
\EE[\Ucal_q(a)^2 \cdot \1_{\{i_1 = i_2 \neq j_1 = j_2\}}]
= \frac{1}{N_q(T,a)^2} \sum_{\bt_a, \bu_a \in \Ccal_q(T,a)} \sum_{\tau_1, \tau_2=1}^q \sum_{1 \leq i_1 \neq j_1 \leq p} \EE\left(\prod_{k=1}^{a} x_{i_1,t_k} x_{i_1,u_k}\right) \EE\left(\prod_{k=1}^{a} x_{j_1,t_k-\tau_1} x_{j_1,u_k-\tau_2}\right).
\end{equation}
If $\eqref{eq:var_bound_case2} \neq 0$, then for any $t \in S(\bt_a)$ there exists a unique $u \in S(\bu_a)$ with $|t - u| \leq 1$. Hence, $|S(\bt_a) \cup S(\bu_a)|_{\Fcal} = a$. Combined with
$$
\EE\left(\prod_{k=1}^{a} x_{i_1,t_k} x_{i_1,u_k}\right) = O(1), \qquad
\EE\left(\prod_{k=1}^{a} x_{j_1,t_k-\tau_1} x_{j_1,u_k-\tau_2}\right) = O(1),
$$
which follow from substituting \eqref{eq:MA_1} into the left-hand side above, we obtain $\eqref{eq:var_bound_case2} = O(p^2 T^{-a})$. \\
For case (iii), by an argument similar to case (ii), we obtain:
$$
\EE[\Ucal_q(a)^2 \cdot \1_{\{i_1 = j_2 \neq i_2 = j_1\}}]
= \frac{1}{N_q(T,a)^2} \sum_{\bt_a, \bu_a \in \Ccal_q(T,a)} \sum_{\tau_1, \tau_2=1}^q \sum_{1 \leq i_1 \neq i_2 \leq p} \EE\left(\prod_{k=1}^{a} x_{i_1,t_k} x_{i_1,u_k-\tau_2}\right) \EE\left(\prod_{k=1}^{a} x_{i_2,u_k} x_{i_2,t_k-\tau_1}\right)
= O(p^2 T^{-a}).
$$
Finally, for case (iv), we obtain
$$
\EE[\Ucal_q(a)^2 \cdot \1_{\{i_1 = j_1 = i_2 = j_2\}}]
= \frac{1}{N_q(T,a)^2} \sum_{\bt_a, \bu_a \in \Ccal_q(T,a)} \sum_{\tau_1, \tau_2=1}^q \sum_{i_1=1}^{p} \EE\left(\prod_{k=1}^{a} x_{i_1,t_k} x_{i_1,t_k-\tau_1} x_{i_1,u_k} x_{i_1,u_k-\tau_2}\right).
$$
Given $|S(\bt_a) \cup S(\bu_a)|_{\Fcal} \le 2a$ together with
$$
\EE\left(\prod_{k=1}^{a} x_{i_1,t_k} x_{i_1,t_k-\tau_1} x_{i_1,u_k} x_{i_1,u_k-\tau_2}\right) = O(1),
$$
it follows that $\EE[\Ucal_q(a)^2 \cdot \1_{\{i_1 = j_1 = i_2 = j_2\}}] = O(p)$. Summarizing the results from the four cases above, we get
$$
\Var[\Ucal_q(a)] = O(p) + O(p^2 T^{-a}),
$$
which in turn yields the upper bound for $\Var(Y_{T,p})$:
$$
\Var(Y_{T,p}) 
= \frac{\Var[\Ucal_q(a)]}{\sigma^2(a)}
= O(p^{-1}T^a) + O(1).
$$
\end{proof}

\section{Conclusions} \label{sec:con}

In summary, we develop a high-dimensional white noise test that operates without cross-sectional independence assumptions. Theoretically, we link cross-sectional correlations to a graph structure to facilitate the derivation. Simulations confirm its robustness: unlike methods such as $M_q$ and $S_q$, which favor either sparse or dense temporal dependencies, the proposed test $\Ucal_q(\text{adp})$ delivers consistently high performance across all temporal densities, effectively overcoming the limitations observed in its individual components $\Ucal_q(2)$, $\Ucal_q(4)$, and $\Ucal_q(6)$.

\bibliographystyle{elsarticle-harv.bst}
\bibliography{ref.bib}
\end{document}